\newtheorem{defn}{Definition}[]
\newtheorem{obs}{Observation}[]
\newtheorem{theorem}{Theorem}[section]
\newtheorem{lemma}[theorem]{Lemma}
\newcounter{note}[section]
\newtheorem{corol}[theorem]{Corollary}
\newcommand{\vomit}[1]{}
\newcommand{\strucname}{Tango Search Tree}
\newcommand{\Steiner}{Steiner-closed}
\DeclareMathOperator{\OPT}{OPT}
\DeclareMathOperator{\ALG}{ALG}
\DeclareMathOperator{\St}{Steiner}
\DeclareMathOperator{\adj}{adj}
\DeclareMathOperator{\height}{height}
\DeclareMathOperator{\cut}{Cut}
\newcommand{\lrp}[1]{\left( #1 \right)}
\titleformat{\subsubsection}[runin]
        {\normalfont\bfseries}
        {\thesubsubsection}
        {0.5em}
        {}
        [.]
\tikzstyle{vertex}=[circle,draw=black,fill=white,minimum size=10pt,inner sep=0pt]
\tikzstyle{fvertex}=[circle,draw=black,fill=blue!40,minimum size=10pt,inner sep=0pt]
\tikzstyle{svertex}=[regular polygon,regular polygon sides=4,draw=black,fill=orange!70,minimum size=10pt,inner sep=0pt]
\tikzstyle{steiner}=[diamond,draw=black,fill=red,minimum size=10pt,inner sep=0pt]
\tikzstyle{edge} = [draw,thick,-]
\tikzstyle{thick edge} = [draw=blue,ultra thick,-]
\tikzstyle{dashed edge} = [draw,thick,dashed,-]
\tikzstyle{oriented edge} = [draw,line width=1pt,->]
\tikzstyle{dashed oriented edge} = [draw,thick,dashed,->]
\title{
Competitive Online Search Trees on Trees}
\author{
Prosenjit Bose\thanks{Carleton University, Ottawa, Canada. Research supported in part by NSERC.}
\and
Jean Cardinal\thanks{Universit\'{e} libre de Bruxelles (ULB), Belgium.}
\and
John Iacono$^\dag$\thanks{Supported by the Fonds de la Recherche Scientifique-FNRS under Grant no MISU F 6001 1.}\ \ \thanks{New York University. Supported by grant NSF AitF 1533564.}
\and
Grigorios Koumoutsos$^\dag$\footnotemark[4]
\and
Stefan Langerman$^\dag$\thanks{Directeur de recherches du F.R.S-FNRS.} }
\begin{document}
\maketitle

\begin{abstract}
We consider the design of adaptive data structures for searching elements of a tree-structured space.
We use a natural generalization of the rotation-based online binary search tree model in which the underlying search space is the set of vertices of a tree. 
This model is based on a simple structure for decomposing graphs, previously known under several names including elimination trees, vertex rankings, and tubings.
The model is equivalent to the classical binary search tree model exactly when the underlying tree is a path.
We describe an online $O(\log \log n)$-competitive search tree data structure in this model, matching the
best known competitive ratio of binary search trees.
Our method is inspired by Tango trees, an online binary search tree algorithm, but critically needs several new notions including one which we call Steiner-closed search trees, which may be of independent interest. Moreover our technique is based on a novel use of two levels of decomposition, first from search space to a set of Steiner-closed trees, and secondly from these trees into paths.
\end{abstract}


\setcounter{page}{0}
\newpage

 \section{Introduction}

\subsection{Problem and Motivation}

Efficient retrieval of data from a structured, finite, universe is a fundamental computational question in computer science.
A classical data structure designed to search elements of a linearly ordered set is the {\em binary search tree}, in which every node contains an element $x$ of the set, and its two subtrees are binary search trees for the elements that are respectively larger or smaller than $x$.
Due to their fundamental flavor and practical importance, binary search trees are the subject of a considerable scientific literature spanning nearly eight decades of research, as well as being a basic component of software libraries.

\paragraph{Searching Vertices of a Tree.} 
We consider the problem of searching for an element that is not part of a linearly ordered set, but rather a vertex of a tree $G$.
For this purpose, we generalize binary search trees to {\em search trees on trees}. 
In such a search tree, the root node contains a vertex $x$ of $G$, and its subtrees are search trees on each of the connected components of the forest $G\setminus x$ (see Figure~\ref{fig:ex_search_tree} for an illustration). 
Note that the roots of the subtrees need not be adjacent to $x$ in $G$, hence this search tree forms a rooted tree on the same vertex set as $G$, with a possibly distinct edge set. We will refer to such a tree as a \emph{search tree on the tree} $G$. 
This tree can be searched by iteratively performing node queries, starting from the root: 
at each node $x$, an oracle tells us that either $x$ is the vertex we look for, or in which of the remaining connected components we need to pursue the search.

Note that in the special case where the underlying tree $G$ is a path, a search tree on $G$ is simply a binary search tree on the vertices of $G$ ordered with respect to their location in the path. Search trees on trees are also special cases of trees on graphs, also known as {\em vertex rankings} and they have been studied in various areas of discrete mathematics and computer science, e.g., polyhedral combinatorics, combinatorial optimization, graph theory (we discuss some of the related work in Section~\ref{sec:related_work}).



\paragraph{Adaptive Binary Search Trees --- BST Model of Computation.}
Binary search trees (BSTs) can be viewed as a model of computation where in each operation there is a pointer that starts at the root and can be moved to adjacent nodes at unit cost; additionally a local change known as a \emph{rotation} can be performed at unit cost. A sequence of search operations $X = x_1,\dotsc,x_m$ to nodes of the BST is requested and the goal is to perform all searches at minimum cost. This model was formalized in \cite{Wilber89,DHIP07} and classic binary search trees such as Red-Black trees \cite{DBLP:conf/focs/GuibasS78} and AVL trees \cite{AVL} fit in this model. In the \emph{offline} version of this model, the sequence $X$ is known in advance and the rotations performed might be based on the knowledge of next requests. On the other hand, in the \emph{online} version, each request $x_i$ is revealed after the previous search $x_{i-1}$ has been performed. 

In~\cite{ST85}, Sleator and Tarjan conjectured that there exists a single online BST-model algorithm whose runtime over every sufficiently long sequence of searches is within a constant factor of the optimal \textit{offline} BST-model algorithm on that sequence. This notion of optimality, called \emph{dynamic optimality} is much stronger than other typical types of optimality, including those that are stated in terms of a distribution of operations. 

Sleator and Tarjan~\cite{ST85} introduced the \emph{Splay Tree}, a self-adjusting BST-model algorithm that uses a small set of restructuring heuristics to move the searched node item to the root.
They conjectured that the splay tree was dynamically optimal in the BST model, a conjecture that remains open.
While constant-factor competitiveness remains elusive in the BST model, a breakthrough in the theory of binary search trees was achieved by Demaine, Harmon, Iacono, and P\v{a}tra\c{s}cu~\cite{DHIP07}, who introduced the {\em tango trees}, which were proved to be $O(\log \log n)$-competitive. (See \ref{s:abst} for a discussion of the history of binary search trees in the context of our result).

\paragraph{Adaptive Search Trees on Trees.} Interestingly, while numerous questions on the optimal design of search trees on trees have been addressed (see Section~\ref{sec:related_work}), adaptivity by changing the search tree on tree has never been considered. 
This can be achieved, just in the same way as in binary search trees, via an elementary {\em rotation} operation.
These rotations generalize rotations on binary search trees, as illustrated in Figure~\ref{fig:rotation}.

In this paper, we introduce a natural generalization of the BST computation model involving unit-cost finger moves and rotations, which we call the \emph{general search tree model (GST)}, and consider the design of {\em competitive online search trees on trees} in this model. It is tempting to consider known adaptive binary search tree techniques and try to generalize them to search trees on trees. On the other hand, search trees on trees are much more subtle and many nice properties of BSTs do not generalize. As a result several ``natural'' attempts to generalize the techniques developed for BSTs, seem to fail, as we discuss in detail in Section~\ref{s:compare}. Thus we need several new ideas and techniques in order to design competitive online adaptive search trees.

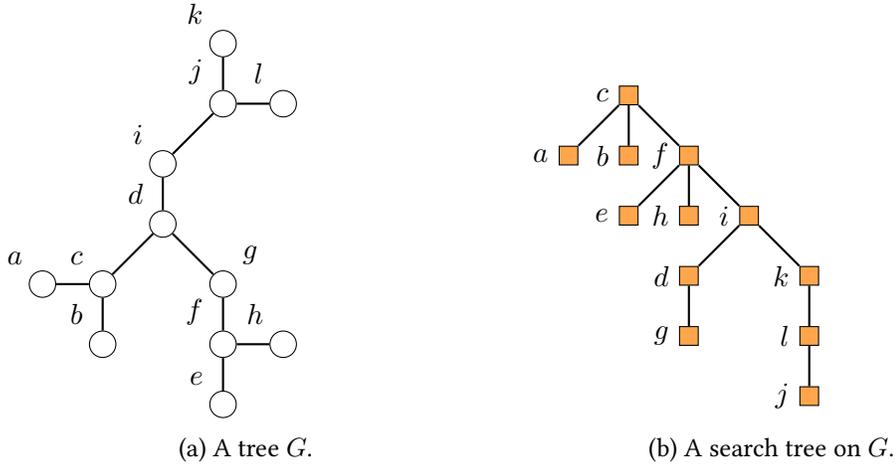
\begin{figure}
  \centering
  \begin{subfigure}[t]{0.4\linewidth}
   \begin{tikzpicture}[scale=.8,auto,swap]
    \foreach \pos/\name in {{(1,3)/a}, {(2,2)/b}, {(2,3)/c}, {(4,2)/f}, {(4,6)/j}, {(3,4)/d}, {(4,1)/e}, {(5,2)/h}, {(3,5)/i}, {(4,7)/k}, {(5,6)/l}}
        \node[vertex, label=above left:$\name$] (\name) at \pos {};
    \node[vertex, label=above right:$g$] (g) at (4,3) {};
    \foreach \source/\dest in {a/c,c/b,c/d,d/g,g/f,f/e,f/h,d/i,i/j,j/k,j/l} \path[edge] (\source) -- (\dest);
   \end{tikzpicture}
   \caption{A tree $G$.}
   \label{fig:treeG}
  \end{subfigure}\quad
  \begin{subfigure}[t]{0.4\linewidth}
   \begin{tikzpicture}[scale=.8,auto,swap]
    \foreach \pos/\name in {{(1,5)/a}, {(2,5)/b}, {(2,6)/c}, {(3,5)/f}, {(3,2)/g}, {(5,2)/l}, {(3,3)/d}, {(2,4)/e}, {(3,4)/h}, {(4,4)/i}, {(5,3)/k}, {(5,1)/j}}
        \node[svertex, label=left:$\name$] (\name) at \pos {};
    \foreach \source/\dest in {a/c,b/c,f/c,e/f,h/f,f/i,i/d,i/k,d/g,k/l,j/l} \path[edge] (\source) -- (\dest);
   \end{tikzpicture}
   \caption{A search tree on $G$.}
   \label{fig:searchtree}       
  \end{subfigure}
  \caption{Illustration of a search tree on a tree $G$.}
  \label{fig:ex_search_tree}
\end{figure}

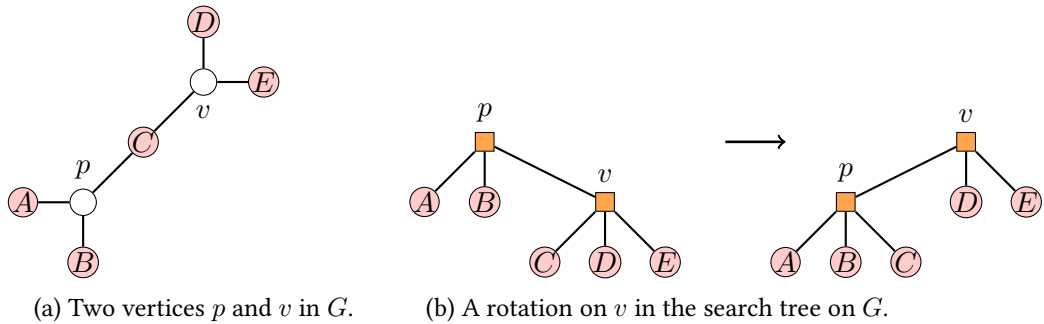
\begin{figure}
  \centering
  \begin{subfigure}[t]{0.3\linewidth}
   \begin{tikzpicture}[scale=.8,auto,swap]
    \foreach \pos/\name in {{(1,2)/A}, {(2,1)/B}, {(3,3)/C}, {(4,5)/D}, {(5,4)/E}} \node[vertex, fill=red!20] (\name) at \pos {$\name$};
    \node[vertex, label=above:$p$] (p) at (2,2) {};
    \node[vertex, label=below:$v$] (v) at (4,4) {};
    \foreach \source/\dest in {A/p,B/p,p/C,C/v,v/D,v/E} \path[edge] (\source) -- (\dest);
   \end{tikzpicture}
   \caption{Two vertices $p$ and $v$ in $G$.}
   \label{fig:rot1}
  \end{subfigure}\quad
  \begin{subfigure}[t]{0.4\linewidth}
   \begin{tikzpicture}[scale=.8,auto,swap]
    \begin{scope}[xshift=0cm]
    \foreach \pos/\name in {{(1,2)/A}, {(2,2)/B}, {(3,1)/C}, {(4,1)/D}, {(5,1)/E}} \node[vertex, fill=red!20] (\name) at \pos {$\name$};
    \node[svertex, label=above:$p$] (p) at (2,3) {};
    \node[svertex, label=above:$v$] (v) at (4,2) {};
    \foreach \source/\dest in {p/A,p/B,p/v,v/C,v/D,v/E} \path[edge] (\source) -- (\dest);
  \end{scope}
    \begin{scope}[xshift=6cm]
    \draw[oriented edge] (0,3) -- (1,3);
    \foreach \pos/\name in {{(1,1)/A}, {(2,1)/B}, {(3,1)/C}, {(4,2)/D}, {(5,2)/E}} \node[vertex, fill=red!20] (\name) at \pos {$\name$};
    \node[svertex, label=above:$p$] (p) at (2,2) {};
    \node[svertex, label=above:$v$] (v) at (4,3) {};
    \foreach \source/\dest in {p/A,p/B,p/v,p/C,v/D,v/E} \path[edge] (\source) -- (\dest);
  \end{scope}
\end{tikzpicture}
   \caption{\label{fig:rotation}A rotation on $v$ in the search tree on $G$.}
  \end{subfigure}
  \caption{Rotations.}
\end{figure}

\subsection{Our Results}

In this work we define a model of computation for search trees on trees, the \emph{general search tree model (GST)}. This model generalizes the BST model, which corresponds to the special case where the underlying tree is a path. We obtain both lower and upper bounds for this model, matching the ones known for the BST model. 

\paragraph{Lower Bound.} We obtain a lower bound on the cost of any algorithm in the GST model, by generalizing the {\em interleave lower bound} of binary search trees~\cite{Wilber89,DHIP07} to search trees on trees.

\paragraph{Upper Bound.} Our main result is an online algorithm for executing search sequences in search trees on trees that is {\em $O(\log \log n)$-competitive}, that is, whose running time in the GST model is at most a factor $O(\log \log n)$ away from that of any other algorithm, {\em even knowing all search requests in advance}. This matches the best competitive factor known so far for binary search trees.

The basic idea is to connect the cost of the algorithm to the interleave lower bound and show that each time this lower bound increases by 1, the algorithm incurs a cost at most $O(\log \log n)$. This is based on the paradigm developed by Demaine, Harmon, Iacono and P\v{a}tra\c{s}cu for Tango trees~\cite{DHIKP09} to achieve competitive ratio $O(\log \log n)$ for binary search trees. However substantially more involved ideas and techniques are needed to obtain the generalization. First, we develop a notion of {\em Steiner-closed search trees} that seems to be independently useful. Moreover the results on BSTs rely crucially on the fact that a subset of $k$ vertices (defining a \textit{preferred path}, see Section~\ref{sec:ub}) can be stored easily in a BST data structure (like red-black trees) that supports split and merge in $O(\log k)$ time. On the other hand, in search trees there is no straightforward analogue. We need a subtle two-level decomposition involving link-cut trees~\cite{ST83,T83} and show that the resulting data structure is a valid search tree on tree. 

\paragraph{Organization.}
In the next subsection, we detail a number of related works. In Section~\ref{s:compare} we provide a comparison between binary search trees and search trees on trees, mentioning the main properties that do not generalize and showing that search trees on trees have a much more rich and subtle structure. In Section~\ref{sec:model}, we give the basic definitions related to search trees and introduce the precise computation model involved.
In Section~\ref{sec:lb}, we show how to generalize the {\em interleave lower bound} of binary search trees to search trees on trees. 
The description of our $O(\log \log n)$-competitive algorithm is given in Section~\ref{sec:ub}.

\subsection{Related Work}
\label{sec:related_work}
\subsubsection{Search trees, vertex rankings, and related notions}
The structure we refer to as search trees on trees, and more generally search trees on graphs, has been studied for decades under many different names. 
They are known as elimination trees~\cite{P88,A94,BGHK95}, vertex rankings~\cite{S89,D93,BDJKKMT98}, and tubings~\cite{CD06,MP15}, among others.

The minimum height of a search tree on a graph $G$ is known in the graph theory community as the {\em tree-depth} of $G$. 
We refer to the text of Ne{\v s}et{\v r}il and Ossona de Mendez~\cite{NO12} for details on the equivalence between the various definitions of tree-depth as well as its connections with other structural parameters of graphs.

Many previous works have been dedicated to constructing minimum-height, and therefore {\em worst-case optimal search trees}~\cite{P88,A94,D93}. 
Most relevant to our result, it was shown by Sch{\"a}ffer that a minimum-height search tree on a tree, hence the tree-depth of a tree, can be found in linear time~\cite{S89}.

\subsubsection{Other search tree models}
More recently, different models were proposed for searching for elements in tree and graph-structured universes. 
Our definition of search trees is based on {\em vertex queries}: every node of the search tree is associated with a vertex of the searched graph.
In contrast, {\em edge queries} have also been studied~\cite{BFN99,OP06,MOW08,CKLPV16,CJLM11,CJLM14}, in which every node of the search tree is associated with an edge of the graph, and the oracle points to one of the two connected components containing the sought vertex. This is sometimes cast as searching in a partial order. A dynamic variant, supporting insertion and deletion of elements, has been proposed by Heeringa et al.~\cite{HIT11}.
Another model that was recently proposed for searching a graph involves an oracle that given a vertex, returns the first edge incident to this vertex on a shortest path to the sought vertex~\cite{EKS16}.

\subsubsection{Polyhedral combinatorics}
The {\em associahedra} are polytopes whose vertices are in bijection with binary search trees on $n$ elements, and edges correspond to pairs of binary search trees that differ by one single rotation~\cite{T51,S63,L89,CSZ15}. From a classical Catalan bijection, their skeletons are also flip graphs of triangulations of a convex polygon on $n+2$ vertices. {\em Graph associahedra} were studied independently by Carr and Devadoss~\cite{CD06}, and Postnikov~\cite{P09}. Just like in associahedra, the vertices of a graph associahedron defined from a graph $G$ are in bijection with the search trees on $G$, and its skeleton is the {\em rotation graph} of the search trees on $G$, that is, two vertices are connected by an edge of the polytope if and only if the search trees differ by a single rotation. The search trees and the rotations are essentially what is referred to by Carr and Devadoss~\cite{CD06} as {\em tubings} and {\em flips}, respectively. In Postnikov~\cite{P09}, and Postnikov, Reiner, Williams~\cite{P08}, the search trees are called {\em $\mathcal B$-trees}. 

Sleator and Tarjan raised the question of the diameter of the associahedron: the largest rotation distance between any two binary search trees on $n$ elements. Together with Thurston, they proved that it was $2n-6$ for sufficiently large values of $n$~\cite{STT86}. A few years ago, Pournin gave a combinatorial proof of this result and established it for every value of $n$~\cite{P14}.
A similar diameter question was studied for graph associahedra by Manneville and Pilaud~\cite{MP15}. They asked the question of the worst-case diameter of {\em tree associahedra}, hence the maximum rotation distance between two search trees on a tree. Cardinal, Langerman, and Perez-Lantero proved that the correct bound was $\Theta (n\log n)$~\cite{CLP18}. 
This body of work provides us with the same theoretical background on search trees on graphs and trees as the combinatorics of associahedra does on binary search trees. 

\subsubsection{Dynamic optimality of binary search trees} \label{s:abst}
Our work most prominently relies on insights and progress on the so-called {\em dynamic optimality} conjecture for binary search trees which posits the existence of $O(1)$-competitive online binary search trees. 
Here we consider the total number of elementary operations, finger moves and rotations, required to search a given sequence of nodes in the tree, and wish this number to lie within a constant factor of that of any other algorithm, even when the latter has access to the sequence of queries in advance. 
We refer to Iacono for a survey on this question~\cite{I13}.

It is famously conjectured that splay trees have the dynamic optimality property. 
This remains unproven, although a number of weaker runtime bounds, such as the static optimality property~\cite{ST85}  (splay trees are as good as the best static BST), the working set property~\cite{ST85} (searching an item is fast if it has been searched recently), and the dynamic finger property~\cite{CMSS00,C00} (searching an item is fast if it is close in key-space to the previous search), are known to hold. 
The crucial {\em Access Lemma} of splay trees and the conditions under which it holds have been extensively analyzed~\cite{S96,GM04,CG0MS15b}.

More recently, the {\em geometric view} on binary search trees was introduced by Demaine, Harmon, Iacono, Kane, and P\v{a}tra\c{s}cu~\cite{DHIKP09}. In this view, optimality of binary search trees is cast as a geometric problem on point sets in the plane. It also introduces the online {\em greedy binary search trees}, an online version of the greedy algorithm introduced independently by Lucas~\cite{L88} and Munro~\cite{Munro00}, conjectured to be dynamically optimal as well. 
Recently, greedy has been proven to satisfy the dynamic finger property~\cite{IL16}, a generalization of static optimality and dynamic finger which is related to the idea of \emph{lazy search} introduced in~\cite{DBLP:journals/algorithmica/BoseDIL16}, as well as certain so-called pattern avoiding search sequences \cite{DBLP:conf/focs/ChalermsookG0MS15}.

Although both splay trees and the greedy algorithm are conjectured to be $O(1)$-competitive, the best known upper bound on their competitive ratio is $O(\log n)$. The best competitive ratio known is $O(\log \log n)$, first achieved by Demaine et. al.~\cite{DHIP07} using {\em tango trees}. Tango trees are designed to approximately match the so-called {\em interleave lower bound}, a variant of a lower bound from Wilber~\cite{Wilber89}. Based on this idea several other $O(\log \log n)$-competitive BST algorithms were later developed~\cite{WDS06,G08}, shown to satisfy several additional properties like the working set bound. It is this structure that we are able to generalize to search trees on trees.

\subsection{From Binary to General Search Trees} \label{s:compare}

It is tempting to think that generalizing the results from the binary search tree model to trees on trees is straightforward. 
It is not. In this section we give several examples of standard and well-known properties of BSTs that do not carry over to search trees on trees and a brief overview of our approach to circumvent the obstacles in the generalization. In fact, given those major differences, it came as a surprise to us that an algorithmic result matching the best known for BSTs can be achieved.

\paragraph{Optimal Static Trees.} Given a probability distribution of the searches constructing a static BST that minimizes the cost of an expected search is a well-studied and frequency taught problem, occupying a section in the ubiquitous CLRS text \cite{DBLP:books/daglib/0023376}. 
A quadratic-time algorithm due to Knuth \cite{DBLP:journals/acta/Knuth71} is a classic application of dynamic programming and a linear-time algorithm proposed in \cite{DBLP:conf/stoc/Fredman75} and fully analyzed in \cite{DBLP:journals/siamcomp/Mehlhorn77} comes within an additive constant of optimal.
However, for trees-on-trees we have no polynomial algorithm or approximation, and attempts to generalize algorithms for trees fail; for example picking as the root the centroid in the distribution and recursing (as in \cite{DBLP:journals/acta/Mehlhorn75}) does not always approximate the optimal tree within a constant factor; dynamic programming also does not have an obvious polynomial-time solution as the number of connected subtrees of a tree is exponential (as opposed to the quadratic number of connected subpaths of a path).

Furthermore, in BST information theory \cite{Shannon1948} applies and it is well-known that the entropy of the distribution is a lower bound on the search cost due. This is not the case for search trees on trees; this is easily seen in, for example a $(n-1)$-star with equal probabilities, by choosing the center of the star as the root every search can be completed in two steps, while entropy gives the higher $\log_2 n$. 
Informally, this is because in binary search trees one comparison gives one bit of information, but in the GST model a comparison gives an amount of information that varies with the degree of the node.

\paragraph{Geometric View.} The {\em geometric view} also constitutes a major contribution to the theory of binary search trees~\cite{DHIKP09}. In this setting, searches to nodes of a binary search trees are pictured as points in the plane, with one dimension being key value and the other time, and any execution of a binary search tree algorithm on this query sequence is pictured as a superset of these points, that satisfy the a geometric condition known as arboral setisfaction. 
The proof of equivalence between this point set and a binary search tree algorithm relies crucially on the fact that any binary search tree is at rotation distance at most linear from any other~\cite{STT86}. 
This is not true for search trees on trees: there exist pairs of search trees on a tree $G$ of $n$ vertices that are at rotation distance $\Omega (n\log n)$~\cite{CLP18}.
It is because of this that extending the geometric view to the trees-on-trees setting fails, even when one dimension is viewed as being tree-shaped.

\paragraph{Splay Trees and Greedy.} One might also ask whether there is a natural generalization of the splay tree algorithm to search trees on trees. Even taking into account the most relaxed conditions for the access lemma to hold~\cite{S96,CG0MS15b}, the question remains unclear due to entropy not being a lower bound.
Additionally, a natural adaption of the reconfiguration heuristic of splay trees to the trees-on-trees setting results in a structure that can be as bad as possible, even on a star graph.
It is also unclear how we can generalize greedy binary search trees --- another candidate for dynamic optimality~\cite{I13} --- to the tree setting as the online variant depends on the transformations inherent in the geometric view.

\paragraph{Our Approach.} It is in this context of the failure of so much of the machinery that has been developed over the past fifty years of BST research to carry over to the GST model that we surprisingly are able to develop a $O(\log \log n)$-competitive algorithm in the GST model. 
Our result is inspired by the BST-model Tango trees, but as the reader shall discover we need a number of new observations, such as the notion of Steiner-closed, which are specific to the GST model.
Crucial to obtaining our result is observing that while entropy-based lower bounds fail in the GST model, we are able to adapt one of the lower bounds due to Wilber \cite{Wilber89} to the GST model. This lower bound is then matched by a factor $O(\log \log n)$ to our data structure using a two-level decomposition. We first decompose a balanced search tree into preferred paths which are represented by search trees. By resorting to Sleator and Tarjan's link-cut trees~\cite{ST83} for handling the changes in preferred paths, we end up with a two-level decomposition into paths, which is eventually managed by splay trees.

\section{ Computation model}
\label{sec:model}

We proceed by defining our rotation-based search tree model.

\begin{defn}[Search tree on a tree]
A rooted tree $T$ is a valid \emph{search tree}
on a given unrooted tree $G = (V,E)$ if
the root $r$ of $T$ stores a vertex of $G$ and
the rooted subtrees of $T\setminus r$ are valid search trees on the connected components of $G \setminus r$.
\end{defn}

Note that in the above definition the trees $T$ and $G$ do not have degree restrictions. While $T$ is rooted, there is no order among the children of a node. Observe that in case the tree $G$ is a path, then a search tree $T$ of $G$ is a binary search tree (BST) with respect to the total order implied by the path. Thus, search trees on trees are a natural generalization of binary search trees.
Throughout the rest of the paper we assume a fixed tree $G$ unless otherwise indicated and $n$ denotes the number of vertices in $G$.

\begin{defn}[Rotation]
A rotation on a non-root node $v$ of $T$ is a local change which yields another search tree constructed as follows:
Let $p$ be the parent of $v$ in $T$. 
Swap $p$ and $v$ in $T$. All children of $p$ remain children of $p$. For a child $u$ of $v$, let $S_u$ be the set of nodes in its subtree. For at most one child $u$ of $v$, there might be a node of $S_u$ adjacent to $p$ in $G$; then $u$ becomes a child of $p$; all other children of $v$ remain children of $v$.
\end{defn}

We refer to Figures~\ref{fig:rotation} and~\ref{fig:rotations_2} for a visual explanation. The following is a direct observation.

\begin{figure}
  \centering
   \begin{tikzpicture}[scale=.8,auto,swap]
    \begin{scope}[xshift=0cm]
    \foreach \pos/\name in {{(1,5)/a}, {(2,5)/b}, {(2,6)/c}, {(3,5)/f}, {(3,2)/g}, {(5,2)/l}, {(3,3)/d}, {(2,4)/e}, {(3,4)/h}, {(4,4)/i}, {(5,3)/k}, {(5,1)/j}}
        \node[svertex, label=left:$\name$] (\name) at \pos {};
    \foreach \source/\dest in {a/c,b/c,f/c,e/f,h/f,i/d,i/k,d/g,k/l,j/l} \path[edge] (\source) -- (\dest);
    \path[thick edge] (f) -- (i);
  \end{scope}
  \begin{scope}[xshift=6cm]
    \draw[oriented edge] (-1,6) -- (0,6);
    \foreach \pos/\name in {{(1,5)/a}, {(2,6)/c}, {(2,5)/b}, {(3,5)/i}, {(2,4)/f}, {(4,4)/k}, {(1,3)/e}, {(2,3)/h}, {(3,3)/d}, {(4,3)/l}, {(3,2)/g}, {(4,2)/j}} \node[svertex, label=left:$\name$] (\name) at \pos {};
    \foreach \source/\dest in {a/c,c/b,c/i,i/k,e/f,f/i,f/h,k/l,d/g,j/l} \path[edge] (\source) -- (\dest);
    \path[thick edge] (f) -- (d);
  \end{scope}
  \begin{scope}[xshift=11cm]
    \draw[oriented edge] (-1,6) -- (0,6);
    \foreach \pos/\name in {{(1,5)/a}, {(2,6)/c}, {(2,5)/b}, {(3,5)/i}, {(2,3)/f}, {(4,4)/k}, {(1,2)/e}, {(2,2)/h}, {(2,4)/d}, {(4,3)/l}, {(3,2)/g}, {(4,2)/j}} \node[svertex, label=left:$\name$] (\name) at \pos {};
    \foreach \source/\dest in {a/c,c/b,c/i,i/k,i/d,e/f,f/h,f/g,k/l,j/l,d/f} \path[edge] (\source) -- (\dest);
  \end{scope}
  \end{tikzpicture}
  \caption{Rotations in search trees on tree $G$ of Figure~\ref{fig:treeG}. The rotation is performed on the blue edge. Thus, for example, in the left tree $i$ is a child of $f$ and in the middle tree $f$ has become a child of $i$.}
  \label{fig:rotations_2}
\end{figure}
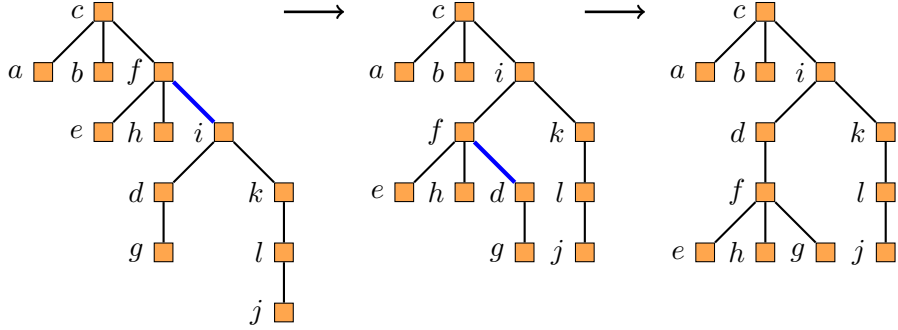

\begin{obs}
If $T$ is a valid search tree on $G$, and $v$ is a nonroot node of $T$, then the tree obtained after a rotation on $v$ is a valid search tree on $G$
\end{obs}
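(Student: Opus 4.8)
The plan is to verify the defining recursive property of a search tree on $G$ directly, using the definition of rotation. Recall the setup: we rotate on a nonroot node $v$ of $T$ with parent $p$, the rotation swaps $p$ and $v$ (so $v$ moves up, $p$ moves down and becomes a child of $v$), all children of $p$ other than $v$ stay as children of $p$, and at most one child $u$ of $v$ — the unique one whose subtree $S_u$ contains a vertex adjacent to $p$ in $G$, if such a $u$ exists — is reparented to become a child of $p$; the remaining children of $v$ stay children of $v$. Call the resulting tree $T'$. We must show $T'$ is a valid search tree on $G$.

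First I would observe that a rotation is an entirely local operation: the only nodes whose parent changes are $v$, $p$, and possibly the one reparented child $u$, and the only vertex of $G$ whose associated ``region'' changes is the one at $p$. More precisely, let $W$ be the set of vertices of $G$ stored in the subtree of $T$ rooted at $p$ (equivalently, at $v$ in $T'$, since that subtree occupies the same vertex set). Since $T$ is a valid search tree, this subtree is a valid search tree on the connected component $C$ of $G \setminus (\text{ancestors of }p)$ that contains $p$; and $W = V(C)$. Outside this subtree nothing changes, so it suffices to prove that the subtree of $T'$ rooted at $v$ is a valid search tree on $C$. This reduces the whole claim to a single step of the recursive definition applied to $C$ with $v$ at the root.

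So, working inside $C$: in $T$, the component $C$ has $p$ at its root, and $C \setminus p$ decomposes into connected components, one of which — call it $C_v$ — is handled by the subtree rooted at $v$; the others are handled by the other children of $p$. Now root $C$ at $v$ instead. Then $C \setminus v$ has connected components, and I claim they are exactly: (i) the old components $C \setminus p$ other than $C_v$, each still hanging off $p$ unchanged; (ii) the ``rest'' of $C_v$ after removing $v$, which splits into the connected components handled by the children of $v$ that remain children of $v$; and (iii) one special component, the one containing $p$ (since $p \in C_v \setminus v$), which is the union of $\{p\}$, all the components from (i), and the component $C_u := S_u \cap (\text{that piece})$ if there is a child $u$ of $v$ adjacent-in-$G$ to $p$'s subtree — and this special component is precisely what the subtree of $T'$ rooted at $p$ covers. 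The key combinatorial point, which I expect to be the main obstacle, is showing that when $v$ is removed from $C$, the piece $S_u$ (for the reparented child $u$) genuinely merges with $p$'s piece into one connected component, while every other $S_{u'}$ stays a separate component not containing $p$: this is exactly what the condition ``$S_u$ contains a vertex adjacent to $p$ in $G$'' guarantees, and one must check that at most one child can satisfy this (because in the tree $G$, removing $v$ disconnects $G$ into subtrees, and $p$ lies in exactly one of them, so only the child of $v$ whose subtree meets that particular subtree of $G \setminus v$ can be adjacent to $p$'s side). Once the partition of $C \setminus v$ into these pieces is established and matched up with the subtrees of $v$ in $T'$, validity follows by induction: each remaining child of $v$ already rooted a valid search tree on its piece (unchanged), and the subtree rooted at $p$ in $T'$ is — again by the recursive definition applied one level down, together with the fact that $T$ restricted below $p$ was valid — a valid search tree on the merged component.

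Finally I would note the degenerate cases to make the argument airtight: if $v$ has no child whose subtree is adjacent to $p$ in $G$, then no child is reparented and the special component in (iii) is just $\{p\}$ together with the components from (i); if $p$ has no other children, (i) is empty; if $v$ is a leaf, the rotation merely swaps $p$ and $v$. None of these affect the structure of the argument. The crux throughout is the correspondence between ``removing a node from the search tree's host component'' and ``the subtrees of that node in the search tree,'' which is forced to hold by validity of $T$ and is preserved, one reparenting at a time, by the rotation rule.
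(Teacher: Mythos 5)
The paper offers no proof of this observation at all---it is presented as immediate after the rotation definition, with only the figures as justification---so there is no ``paper's argument'' to compare against. Your proposal is a full, essentially correct proof, and it identifies the right crux: localize to the subtree of $T$ rooted at $p$ (a valid search tree on a connected component $C$ of $G$ minus the ancestors of $p$), then match the components of $C\setminus v$ with the children of $v$ in $T'$, where the only non-trivial point is that at most one of $v$'s former subtrees $S_u$ can be adjacent to $p$ in $G$, and that this $S_u$ merges with $p$'s side when $v$ replaces $p$ at the root of $C$. That uniqueness follows because $C_v$ is a connected subtree of the tree $G$ with $p\notin C_v$, so $p$ has exactly one $G$-neighbour in $C_v$, and that neighbour lies in at most one component of $C_v\setminus v$.

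Two small writing slips worth fixing: the parenthetical ``since $p\in C_v\setminus v$'' is wrong as stated, because $C_v$ is a component of $C\setminus p$ and hence $p\notin C_v$; the intended justification is simply $p\in C\setminus v$ (since $p\neq v$). Also, your items (i)--(iii) are announced as ``exactly'' the connected components of $C\setminus v$, but the pieces in (i) are not components of $C\setminus v$ on their own---they all merge with $\{p\}$ and $S_u$ into the single component in (iii). The actual components of $C\setminus v$ are only those in (ii) together with the one in (iii). Neither slip affects the substance, and the argument stands.
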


\begin{defn}[GST model of computation]
In the GST model of computation, we are given a tree $G$ and we maintain a tree $T$ which is valid search tree on $G$. At each time, there is a single node pointer at $T$. At unit cost we can perform the following operations:
\begin{enumerate}
\item Move the pointer to a child or the parent of the current node.
\item Rotate the current node $v$.
\end{enumerate} 
A \emph{search} operation for $v\in V$ is any sequence of unit-cost operations where the pointer starts at the root $r$ of $T$ and points to $v$ at some point during the execution of the operation.
\end{defn}

Using this definition, it is easy to see that the GST model is a generalization of the binary search tree (BST) model of computation.

\begin{obs}[Relation with BST model]
\label{obs:gst2bst}
If the tree $G$ is a path, any valid search tree $T$ on $G$ is a binary search tree, search tree rotations correspond to BST rotations, and the GST model of computation is equivalent to the BST model of computation as formulated in \cite{DHIP07}. 
\end{obs}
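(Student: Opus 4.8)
The plan is to verify the three assertions in turn, each by unwinding a definition; the only part with any content is checking that the somewhat involved clause in the definition of a GST rotation collapses to the familiar BST rotation when $G$ is a path. First I would show that a valid search tree $T$ on a path $G$ is a BST for the total order that $G$ induces on $V$. This is an induction on $n=|V|$: the root $r$ of $T$ holds some vertex $x$, and since $G$ is a path, $G\setminus x$ has at most two connected components, namely the sub-path of vertices preceding $x$ and the sub-path of vertices following $x$ in the path order. Hence $r$ has at most two subtrees, and by the induction hypothesis each is a BST on the corresponding sub-path; every vertex stored in the ``predecessor'' subtree is smaller than $x$ and every vertex in the ``successor'' subtree is larger, which is exactly the BST invariant, once we identify those two subtrees with the left and right child respectively. (The GST model puts no order on children, but when there are at most two of them the path-interval structure canonically designates one ``left'' and one ``right'', so nothing is lost.) The converse --- every BST on the path order is a valid search tree on $G$ --- is the same induction read backwards.

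Next I would check the rotation correspondence. Fix a non-root node $v$ with parent $p$; by the previous step $p$ is a BST node with, say, $v$ as its right child, so $p<v$ and the subtree $S_v$ of $T$ rooted at $v$ occupies the path interval immediately to the right of $p$. In the GST rotation, $v$ and $p$ swap, $p$ keeps all of its children (in particular its left subtree), and at most one child $u$ of $v$ --- the one whose subtree contains a vertex adjacent to $p$ in $G$ --- is reassigned to $p$. In the path, ``adjacent to $p$'' means ``is the successor of $p$'', and the successor of $p$ is precisely the minimum element of $S_v$, which lies in $v$'s left subtree when that subtree is nonempty (and equals $v$ itself otherwise, in which case no reassignment occurs). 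Thus the reassigned child $u$ is exactly the left child of $v$, and it becomes the right child of $p$, while $v$'s right subtree stays with $v$. This is verbatim the standard BST rotation of $v$ over $p$; the mirror case where $v$ is the left child of $p$ is symmetric. Consequently GST rotations restricted to paths are exactly BST rotations, and the set of trees reachable via rotations is the same.

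Finally, the model equivalence follows by assembling these facts. A GST pointer move goes to the parent or to a child, and on a BST a node has at most two children, so this is the same primitive as moving to the parent, the left child, or the right child; rotations coincide by the previous paragraph; and a GST \emph{search} for $v$ --- a unit-cost sequence starting at the root and touching $v$ --- is the same notion as a BST search. Hence the cost of executing any request sequence is identical in the two models, which is precisely the BST formalization of~\cite{DHIP07}. I do not expect a genuine obstacle --- this is an observation --- but the one place that needs care is the rotation clause: one must argue that ``a node of $S_u$ adjacent to $p$ in $G$'' singles out the left child of $v$ and nothing else, which rests on the fact that in a path the only vertex adjacent to $p$ that can lie below $v$ in $T$ is $p$'s successor, and that this successor is the minimum of $S_v$. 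Once that is pinned down, the rest is bookkeeping.
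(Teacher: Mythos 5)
Your proposal is correct, and it fills in details that the paper itself leaves implicit: the paper states this as an \emph{Observation} with no proof beyond the remark that ``it is easy to see,'' so there is no competing argument to compare against. Your verification is exactly what that remark elides: the induction showing a search tree on a path is a BST (using that removing a vertex from a path yields at most two subpaths, one on each side), and --- the only part with any content --- the check that the clause ``a node of $S_u$ adjacent to $p$ in $G$'' in the GST rotation definition singles out precisely the one child of $v$ that BST rotation reassigns, because on a path the unique relevant $G$-neighbor of $p$ inside the subtree of $v$ is $p$'s successor (resp.\ predecessor), i.e.\ the minimum (resp.\ maximum) of $S_v$, which sits in $v$'s inner subtree when that subtree is nonempty, and is $v$ itself otherwise (in which case no child moves, matching the degenerate BST rotation). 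The handling of the empty-inner-subtree case and the final assembly into equivalence of pointer moves, rotations, and search operations are all accurate.
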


A sequence $X=x_1, x_2, \ldots x_m$ is a valid search sequence in a tree $G=(V,E)$ if all $x_i \in V$. The main goal of this work is to examine the complexity of executing a search sequence in the GST model.

\begin{defn}[Optimal]
Let $OPT(G,X)$ be the optimal cost of any  GST-model algorithm to execute the sequence of searches $X$ starting from any initial search tree $T$ on $G$.
\end{defn}

Note that while we formally care about rotations and pointer moves only, we are mostly interested in practical algorithms, for which the overall cost, including additional bookkeeping operations, is kept within a reasonable factor of the GST cost.

The notion of preferred child of a node in a search tree will be crucial for both the lower and the upper bound. 

\begin{defn}[Preferred Child]
Let $P$ be a valid search tree on $G$. Let $y$ be a non-leaf node of $P$, with children $y_1,\dotsc,y_d$. At each time $t \in [1,m]$, we define the \textit{preferred child} of $y$ to be the child $y_i$ whose subtree $P(y_i)$ contains the most recent searched vertex in $x_1, \ldots x_t$ that is in a node of $P(y)$ (or is undefined if none of these searches are in $P(y)$). In case last request in $P(y)$ is to $y$, we set preferred child of $y$ to be $y_1$. 
\end{defn}

Note that the preferred child of a node changes throughout the execution of sequence $X$. Understanding and characterizing those changes is an essential part of both the lower bound and the construction of our data structure. 

\section{Lower bound}
\label{sec:lb}

In this section we give a lower bound on the optimal cost of executing a sequence of searches in vertices of a tree $G$ in the search tree model (GST), by generalizing the interleave lower bound of~\cite{DHIP07} for binary search trees. 

In the following definitions assume a fixed tree $G = (V,E)$ on $n$ vertices and let $X=x_1,x_2, \ldots x_m$ be a request sequence of searches drawn from $V$. We begin by defining the interleave bound.  

\begin{defn}[Interleave Bound]
Let $P$ be a valid search tree on $G$. The interleave bound of a node $y$ of $P$ is the number of times the preferred child of $y$ changes over time $1, 2, \ldots m$.  The interleave bound $I(G,P,X)$ is the sum of the interleave bounds of the nodes. 
\end{defn}

Note that in the definition above, $P$ is a \textit{fixed} search tree and does not change throughout the execution of $X$. We show that the interleave bound value of \emph{any} fixed tree $P$ can be used to obtain a lower bound on $\OPT(G,X)$, the optimal cost to execute $X$ in the GST model. 

\begin{theorem}[Interleave Lower Bound in the GST model]
\label{thm:lower_bound}
Let $P$ be a valid search tree on $G $. For any search sequence $X$ in the GST model of computation, we have that $OPT(G,X) \geq I(G, P,X)/2-n$.
\end{theorem}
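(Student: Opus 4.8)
The plan is to adapt the classical interleave lower bound argument of Demaine--Harmon--Iacono--P\v{a}tra\c{s}cu to the GST setting, tracking the "transition points" of the fixed reference tree $P$ against the working tree $T$ maintained by an arbitrary GST algorithm $\ALG$. The key object is, for each node $y$ of $P$ and each child $y_i$ of $y$, the set $P(y_i)$ of vertices in the subtree of $P$ rooted at $y_i$, together with $y$ itself; these sets partition along the tree structure of $P$. I would fix an arbitrary execution of $\ALG$ on $X$ realizing cost $\OPT(G,X)$, and for each node $y$ of $P$ define a potential or a counting scheme that charges the preferred-child changes at $y$ to pointer moves (and rotations) that $\ALG$ must perform near the "boundary'' inside the region $P(y)$.

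The core combinatorial step: at any time $t$, in the working tree $T$ there is, for each node $y$ of $P$, a well-defined node $m_t(y)$ that is the minimum-depth node of $T$ lying in $P(y)$ — call it the \emph{transition node} for $y$. (Such a node exists and is unique because the vertices of $P(y)$ induce a connected subtree of $G$, and any search tree restricted to a connected region has a unique topmost node; this is exactly the kind of "region has a unique root'' fact that the GST model inherits from the BST model, and is where the structure of search trees on trees is used.) When the preferred child of $y$ changes between time $t$ and $t+1$ — say the searched vertex $x_{t+1}$ moves from subtree $P(y_i)$ to subtree $P(y_j)$ — the search path in $T$ for $x_{t+1}$ must pass through a node of $P(y_i)$-region or more precisely must "cross'' the transition node $m_{t+1}(y)$, and between the previous and current search the algorithm's pointer must have touched $P(y)$ at its transition node. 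The standard argument then shows each distinct node of $T$ can be the transition node for at most a bounded number of distinct $y$'s \emph{at a given time}, so that summing over all $y$ the number of preferred-child changes is at most twice the total pointer movement of $\ALG$, up to an additive $O(n)$ slack absorbing initialization effects (the $-n$ term). Concretely I would argue: charge each preferred-child change of $y$ to the visit of $m(y)$ during the relevant search, observe that the transition nodes for distinct nodes $y, y'$ of $P$ are distinct \emph{as long as neither is an ancestor of the other in $P$}, and handle the ancestor chains by noting a search path in $T$ visits each node once, giving the factor $2$ (one unit when entering, amortized against the interleave count) and the loss of at most one charge per node of $P$ hence the $-n$.

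The main obstacle I expect is precisely establishing that distinct preferred-child changes get distinct (or boundedly-shared) charges in $T$ — i.e., proving the injection-up-to-constant from interleave events to pointer moves. In the BST case this rests on the clean fact that the nodes of a contiguous key-interval form a connected subtree of any BST with a single topmost node, and that these topmost nodes for a laminar family of intervals behave predictably. Here the analogue requires: (i) that $P(y)$ induces a connected subgraph of $G$ (immediate from the definition of search tree, since $P(y)$ is exactly a connected component left after deleting the ancestors of $y_i$ in $P$, unioned appropriately), and (ii) that in \emph{any} valid search tree $T$ on $G$, the vertices of a connected subset $S\subseteq V$ occupy a subtree of $T$ with a unique minimum-depth element — this is a genuine structural lemma about search trees on trees that I would state and prove separately (it follows because the topmost element of $T$ lying in $S$ dominates, in $T$, all others: removing it disconnects $G$ only outside $S$ by validity, forcing the rest of $S$ into one of its $T$-subtrees, and inducting). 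Once this lemma is in hand, the charging argument is a routine translation of the $O(\log\log n)$-tango-tree lower-bound proof, and the factor $1/2$ and the $-n$ come out exactly as in the BST case.
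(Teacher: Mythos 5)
Your high-level architecture is the right one — fix the reference tree $P$, identify for each $y\in P$ and each time $t$ a distinguished node of the working tree $T_t$ associated with $y$, observe that preferred-child changes force touches of these nodes, and show the charges are essentially disjoint. The standalone structural lemma you propose (any connected subset $S\subseteq V$ of $G$ occupies a connected subtree of any valid search tree $T$ with a unique minimum-depth node) is correct and is exactly the fact the paper uses. But the object you charge to is the wrong one, and the fix you sketch for the resulting over-counting does not hold.

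You define $m_t(y)$ as the minimum-depth node of $T_t$ lying in all of $P(y)$ and call this the ``transition node.'' In the paper's terminology this is the \emph{dominating node}, not a transition point. The paper instead looks at each child $y_i$ of $y$, lets $\ell_i$ be the minimum-depth node of $T_t$ in $P(y_i)$, declares the unique shallowest of these to be dominating, and calls every other $\ell_i$ a \emph{transition point}. The pointwise disjointness lemma the paper proves — at any fixed time $t$, a node of $T_t$ is a transition point for at most one $y\in P$ — holds precisely for these transition points and is what rules out double-counting. The analogous statement fails for your $m_t(y)$: along any $P$-ancestor chain $y_1,\dotsc,y_k$ (with $y_{j+1}$ a child of $y_j$), whenever the topmost node of $P(y_1)$ in $T_t$ happens to lie in $P(y_k)$ we have $m_t(y_1)=m_t(y_2)=\cdots=m_t(y_k)$, so a single visit of that vertex would absorb up to $k$ charges. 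A single search $x_t$ can flip preferred children at every node on a length-$\Theta(\log n)$ chain whose $m_t$-nodes all coincide, yet under your scheme each of those $\Theta(\log n)$ interleave events is charged to the \emph{same one touch} of $m_t(\cdot)$ during the search for $x_t$. Your proposed remedy — ``a search path in $T$ visits each node once, giving the factor $2$ \dots and the loss of at most one charge per node of $P$ hence the $-n$'' — does not close this: visiting each node at most once per search does not bound the number of $y$'s for which that node simultaneously serves as $m_t(y)$, and the $-n$ term only absorbs one initialization event per node of $P$, not a $\Theta(\log n)$ systematic over-count on every search. To obtain the stated factor $2$ you need the dominating/non-dominating case split as in the paper: charge the $q_1$ preferred-child changes caused by a request to a non-dominating subtree to the forced touch of that subtree's transition point, and charge the remaining changes to the transition-point touches forced by dominating-subtree swaps between consecutive dominating-subtree requests; then $\max(q_1,q_2)\geq q/2-1$ finishes step~(i), and the disjointness lemma for transition points finishes step~(ii).
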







\begin{proof}
 Let $\ALG$ be any GST-model algorithm. At a high-level, the proof consists of two main steps:
\begin{enumerate}[(i)]
\item We show that if for a fixed node $y$ in $P$ the interleave bound value is $q$, then there are at least $q/2-1$ unit-cost operations performed by $\ALG$. We charge those operations to node $y$.
 \item We show that for two different nodes $y \neq z$ of $P$, the unit-cost operations charged to $y$ and $z$ are disjoint. 
\end{enumerate}

It is easy to see that those two steps imply the theorem; by summing overall nodes $y$ of $P$, we get that $\ALG$ has cost at least $I(G, P,X)/2 - n $.

\vspace{0.2cm}

\paragraph{Notation and Definitions} Before proving parts (i) and (ii) we introduce some definitions and notation. Let $T_t$ be the tree maintained by $\ALG$ after the $t$th search for $t=1,\dotsc,m$. 

In all following definitions, we fix $y$ to be a node of $P$ of degree $d$ with children $y_1,\dotsc,y_d$ (ordered arbitrarily but consistently throughout the whole execution). Let $P(y)$ denote the subtree of $P$ rooted at $y$. For $i=1,\dotsc,d$, let $\ell_i$ be the node of subtree $P(y_i)$ with the smallest depth in tree $T_t$ (assuming that $y \in P(y_1)$), for any  $1\leq t \leq m$.

\begin{defn}[Dominating node and dominating subtree]
Let $\ell_{i_t}$ be the node with smallest depth in $T_t$ among $\ell_1,\dotsc,\ell_{d}$, for some $1 \leq i_t \leq d$. Then, $\ell_{i_t}$ is the lowest common ancestor in $T_t$ of all nodes stored in $P(y)$. We call $\ell_{i_t}$ the \emph{dominating node of $P(y)$ in $T_t$} and $P(y_{i_t})$ the \emph{dominating subtree} of $P(y)$.
\end{defn}
Note that as the tree $T_t$ evolves over time, the dominating subtree of $y$ might change. We couple this definition with the definition of transition points. 

\begin{defn}[Transition point]
Let $\ell_{i_t}$ be the dominating node of $P(y)$ in $T_t$. For each $i \neq i_t$, we call $\ell_i$ to be the \emph{transition point} of $y$ for $P(y_i)$ at time $t$.
\end{defn}

Note that at each given time $t$ we have exactly $d-1$ transition points of $y$, one corresponding to each $P(y_i)$, for $i \neq i_t$. In the proof of step (i), we will charge $y$ only for touches of $\ALG$ to its transition points.

\paragraph{Properties of transition points.} We now state some basic observations following from the definition of transition points which will be crucial in our proof.


\begin{obs}
\label{obs:transition_opt_1}
A transition point of a node $y \in P$ can not be the root of $T_t$, since $\ell_{i_t}$ is its ancestor. Thus whenever $\ALG$ has to touch a transition point of $y$, it incurs a cost of at least 1.
\end{obs}


\begin{obs}
\label{obs:trans_touch}
Let $\ell_{i_t}$ be the dominating node of $P(y)$ in $T_t$. If the request $x_{t+1}$ is to a node of subtree $P(y_i)$ for some $i \neq i_t$, then the transition point $\ell_{i}$ has to be touched by $\ALG$. 
\end{obs}

\paragraph{Proof of Step (i).} Assume that the interleave bound for $y$ equals $q$. Consider the subsequence of requests $x_{j_1},\dotsc,x_{j_q}$ where the preferred child of $y$ changes. Clearly, any two consecutive requests $x_{j_k},x_{j_{k+1}}$ are from different subtrees $P(y_k),P(y_{k+1})$. 
 
\vspace{0.05cm}
 
 \textit{Requests in non-dominating subtrees:} Each time a node from a non-dominating subtree $P(y_i)$ is requested, the transition node $\ell_i$ has to be touched (by Observation~\ref{obs:trans_touch}). Thus at least one unit-cost operation has to be performed (by Observation~\ref{obs:transition_opt_1}). We charge this operation to $y$. 
 
 \vspace{0.05cm}
 
 \textit{Requests in dominating subtrees:} Let $x_{j_k},x_{j_{k+1}}$ be two consecutive requests such that in both requests a node from the dominating subtree is requested. Since the subtrees $P(y_k)$ and $P(y_{k+1})$ are different, that means the dominating subtree changed at least once during $(j_k,j_{k+1})$; $P(y_{k+1})$ was not a dominating subtree at time $j_k$, but it is at time $j_{k+1}$, thus at some time point during  $(j_k,j_{k+1})$ there should have been a rotation between the lowest common ancestor of points of $P(y_{k+1})$ (which was a transition point of $y$) and the dominating point of $P(y)$. We get that the transition point of $y$ for $P(y_{k+1})$ is touched at least once during $(j_k,j_{k+1})$ and we charge one to this touch. 

\vspace{0.05cm}

\textit{Counting the cost:} Let $q_1$ (resp. $q_2$) be the number of requests to non-dominating (resp. dominating) subtrees of $P(y)$. Note that $q_1 + q_2 = q$. To get an overall bound on the number of unit-cost operations charged to $y$, we perform a simple case analysis on values of $q_1,q_2$. 

In case $q_2 \leq \lceil q/2 \rceil$, we count only the unit-cost operations charged for requests on non-dominating subtrees. There are $q_1$ of them. We have that $q_1 = q - q_2  \geq  q/2-1  $.

On the other hand, in case $q_2 > \lceil q/2 \rceil $, we count all requests charged. There are $q_1$ requests to non-dominating subtrees. Among the requests to dominating subtrees, the ones that are preceded by a request to a non-dominating subtree (there are $q_1$ of them) are not charged. All the others (there are $q_2 - q_1$ of them) are charged. We get that the total charge to $y$ is at least $q_1 + (q_2 - q_1) = q_2 \geq q/2 -1$.

Overall we charged at least $q/2 - 1 $ requests. 

\paragraph{Proof of Step (ii).} We prove the following lemma.
\begin{lemma}
At any given time $t$, each node $v$ of $T_t$ can be a transition node of at most one node $y$ of $P$.
\end{lemma}

\begin{proof}
Take two nodes $y$ and $z$ of $P$. If trees $P(y)$ and $P(z)$ are disjoint, then clearly all transition points of $y$ and $z$ are different. Otherwise, if $P(y)$ and $P(z)$ intersect, one of $y,z$ is an ancestor of the other in $P$. Assume without loss of generality that $y$ is an ancestor of $z$. If the dominating subtree for $y$ is the subtree including $P(z)$, then all transition points of $y$ are not in $P(z)$. Otherwise, there is a transition point $\ell$ for $y$ in the subtree of $P(y)$ which includes $P(z)$. Observe that $\ell$ is the lowest common ancestor of all points of $P(z)$ it $T_t$, so it can not be a transition point for $z$.  \qedhere
\end{proof}

\vspace{0.1cm}

\noindent Since preferred child changes for nodes $y$ of $P$ are charged to touches of transition points for $y$, this implies that by summing overall nodes, no unit cost operation is counted twice. We conclude that the cost of $\ALG$ is at least $I(G,P,X)/2 - n. \qedhere$

\end{proof}


\section{Tango Trees on Trees}
\label{sec:ub}

In this section, we develop a dynamic search tree data structure that achieves a competitive ratio of $O(\log \log n)$, for search sequences of length $\Omega(n)$. To achieve this, we connect the cost of our algorithm to the \textit{interleave lower bound} for search trees presented in Section~\ref{sec:lb}. 
 
 \paragraph{{\normalfont{\textbf{Preferred paths.}}}} A crucial ingredient in building our data structure is the notion of a \textit{preferred path}. Let $P$ be a fixed valid search tree of a tree $G$. We define a \textit{preferred path} in $P$ as follows: Start from a node that is not the preferred child of its parent (or start from the root) and perform a walk by following the preferred child of the current node, until reaching a leaf. If the preferred child is undefined, pick one arbitrarily.
 
 Note that each change of preferred child during a search sequence results to changes in the preferred paths of $P$. Let $y$ be a node in a preferred path $\Pi$. If $y$ changes preferred child from $y_j$ to $y_{j'}$, then $\Pi$ splits into two paths $\Pi_1$ and $\Pi_2$ where $\Pi_1$ is from the root to $y$ and $\Pi_2$ is rooted at $y_j$ and ends at a leaf. Then, $\Pi_1$ is merged with the preferred path previously rooted at $y_{j'}$

\begin{obs}
\label{obs:preferred_path_changes}
During a search sequence $X$, there are at most $I(G,P,X) +n$ preferred path changes.
\end{obs}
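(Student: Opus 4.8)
The plan is to bound the total number of preferred-path changes by relating each such change to a preferred-child change, which is exactly what the interleave bound $I(G,P,X)$ counts (plus a one-time $+n$ for the initialization of all paths). First I would set up the static decomposition: at every time step, $P$ is partitioned into vertex-disjoint preferred paths, each obtained by starting at a node that is not the preferred child of its parent (or the root) and walking down through preferred children. Since every node has at most one preferred parent-edge, these paths are indeed a partition of $V(P)$, so there are at most $n$ of them at any moment; this accounts for the additive $n$ in the bound, charging one ``change'' to the creation of each path in the initial configuration (or, equivalently, absorbing into the $+n$ term the fact that the very first time each preferred child is defined does not correspond to a genuine change).

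Next I would analyze what happens at a single request $x_t$. By the definition of the preferred child, a request $x_t$ can cause the preferred child of a node $y$ to change only if $y$ is an ancestor (in $P$) of the node storing $x_t$; and among those ancestors, the preferred child changes precisely when the subtree containing $x_t$ differs from the previously preferred one. The key structural claim — already spelled out in the paragraph preceding the observation — is that one preferred-child change at a node $y$ (from $y_j$ to $y_{j'}$) causes exactly a bounded number of preferred-path modifications: the path $\Pi$ through $y$ splits into $\Pi_1$ (root-to-$y$) and $\Pi_2$ (rooted at $y_j$), and then $\Pi_1$ merges with the path formerly rooted at $y_{j'}$. So each preferred-child change induces $O(1)$ (in fact, at most one split and one merge, which we can count as a single ``preferred path change'' per the intended bookkeeping) new preferred-path events. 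Summing over all nodes $y$ and all times $t$, the total number of preferred-child changes over the whole sequence is exactly $I(G,P,X)$ by definition of the interleave bound, hence the total number of preferred-path changes is at most $I(G,P,X)$, plus the initial $n$ paths.

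The main obstacle — really the only subtle point — is the bookkeeping of what counts as ``one'' preferred-path change and making sure the constant hidden in ``each preferred-child change causes $O(1)$ path changes'' is actually $1$, so that the bound comes out to $I(G,P,X)+n$ rather than $2I(G,P,X)+n$ or similar. This is handled by the convention in the preceding paragraph: a split-then-merge at $y$ is treated as a single atomic operation producing one new preferred path (the merged $\Pi_1$-with-old-$y_{j'}$-path) and one newly detached tail ($\Pi_2$); charging the event to the preferred-child change at $y$, and noting that distinct preferred-child changes (even at the same node at different times, or at different nodes at the same time) are distinct events, gives the clean count. One should also check the edge case where the preferred child was previously undefined and is now defined for the first time: such events number at most $n$ overall (one per node) and are absorbed into the additive $n$ term, consistent with treating the initial path partition as the starting configuration. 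With these conventions, summing the per-event count against the definition of $I(G,P,X)$ yields the stated bound directly.
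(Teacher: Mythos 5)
Your proposal is correct and follows essentially the same argument as the paper: the interleave bound $I(G,P,X)$ counts preferred-child changes by definition, each such change corresponds to a single split-and-merge of preferred paths, and the at-most-$n$ transitions from "undefined" to "first defined" preferred child account for the additive $n$. The paper presents this only as a short remark rather than a displayed proof, so your slightly more explicit accounting of the split/merge bookkeeping is a harmless elaboration of the same idea.
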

The additive $n$ stems from the fact that when the preferred child of a node $v$ is undefined, we pick one of them arbitrarily in order to form a preferred path. Thus when the preferred child of $v$ is defined for first time, a preferred path change might occur. Over all nodes there are at most $n$ such preferred path changes. 

\paragraph{{\normalfont{\textbf{High-level overview.}}}}
We fix a search tree $P$ of $G$ which we will call \textit{reference tree}, with the property that the height of $P$ is $O(\log n)$.
Note that the reference search tree is used only for the analysis, we do not need to actually store it. 

At a high-level,  we show that for each preferred path of $P$ ``touched'', we can perform search and all update operations (cutting and merging preferred paths) with an overhead factor $O(\log \log n)$. This implies that we have a dynamic search tree execution with cost $O(\log \log n \cdot (I(G,P,X) +n) ) $, due to Observation~\ref{obs:preferred_path_changes}. This combined with Theorem~\ref{thm:lower_bound} implies that the cost of our dynamic search tree data structure is $O(\log \log n) \cdot \OPT(G,X)$ for sufficiently long sequences.

In traditional tango trees, the preferred paths are represented by binary search trees, such as red-black trees~\cite{DHIP07}, or splay trees~\cite{WDS06}. This is possible because the sets of nodes on a preferred path are totally ordered. In contrast, in our case, the sets of nodes on a preferred path no longer are totally ordered. We enforce a property on the nodes of a preferred path that we call the {\em Steiner-closed} property. Surprisingly, this property ensures that the nodes of a preferred path correspond to trees that are obtained as minors of $G$. Thus, to perform the cutting and merging of preferred paths, we need a data structure that allows us to manipulate arbitrary trees. It turns out that Sleator and Tarjan's link-cut trees are precisely what we need~\cite{ST83,T83}.

\paragraph{Roadmap.} The rest of this section is organized as follows.
In subsection~\ref{subsec:steiner}, we introduce the notion of Steiner-closed sets and Steiner-closed trees.
 In subsection~\ref{sec:reftree} we show that there exists a Steiner-closed reference tree $P$ of depth $O(\log n)$, and that the nodes contained in a preferred path form a tree obtained as a minor of $G$.  
 In subsection~\ref{subsec:aux}, we show that the changes of preferred paths can be implemented using link-cut trees. 
 Finally, in subsection~\ref{subsec:alg} we summarize our overall structure and prove that our data structure is $O(\log \log n)$-competitive.

\subsection{Steiner closed sets and trees}
\label{subsec:steiner}

In the following, for two vertices $a$ and $b$ in $G$, let $P(a, b)$ be the the set of vertices on the path from $a$ to $b$.
We first define the notion of convex hull on a set of vertices of a tree. 

\begin{defn}[Convex Hull]
Given a tree $G=(V,E)$, for a set $S \subseteq V$ of vertices, we
define the convex hull $\CH(S)$ be the subgraph of $G$ induced by the
vertices on all paths $P(a, b)$, for all pairs of points $a, b$ in $S$. 
\end{defn}

We now introduce the notion of {\em Steiner-closed} which is critical for our result.

\begin{defn}[Steiner-closed set]
\label{def:steiner-set}
A set $S$ is a {\em Steiner-closed} set of vertices of a tree $G$
provided that every vertex in $\CH(S)\setminus S$ has degree exactly two in $\CH(S)$.
\end{defn}

\begin{defn}[Steiner-closed tree]
\label{def:steiner-closed-tree}
A search tree $T$ of a tree $G$ is a {\em Steiner-closed} tree
provided that the set of nodes on the path in $T$ from the root to an
arbitrary node in $T$ is a Steiner-closed set with respect to $G$.
\end{defn}

The following is one of the key properties of Steiner-closed trees that lets us manipulate them efficiently. Essentially,
we show that there are not many structural changes that occur in $G$ when splitting and merging paths from the root in 
a Steiner-closed search tree $T$ of $G$.

\begin{lemma} \label{lem:jit}
Let $\Pi=p_0, \ldots, p_j$ be a path from the root $p_0$ to a node $p_j$ in a Steiner-closed search tree $T$ of tree $G$. For any $i\in\{1,\ldots, j\}$, let $\Pi'=p_i,\ldots,p_j$. Removing $\CH(\Pi')$ from $\CH(\Pi)$ results in at most 2 connected components.
\end{lemma}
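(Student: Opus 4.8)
The plan is to analyze the structure of $\CH(\Pi)$ and $\CH(\Pi')$ directly, using the Steiner-closed property of $T$. Note that $\Pi$ is a root-to-node path in $T$, hence a Steiner-closed set, so every vertex of $\CH(\Pi)\setminus\Pi$ has degree exactly two in $\CH(\Pi)$. The same holds for $\Pi'$, since $\Pi'=p_i,\dots,p_j$ is the root-to-$p_j$ path with its top portion removed; but crucially $\Pi'$ is \emph{not} a root-to-node path in $T$, so to invoke the Steiner-closed property for $\Pi'$ I first need to observe that $\Pi'$ is nonetheless a Steiner-closed set. I would argue this from the fact that $\CH(\Pi')\subseteq\CH(\Pi)$ and that degrees can only drop when passing to the induced subgraph on a subset of a convex hull; more carefully, a vertex $v\in\CH(\Pi')\setminus\Pi'$ lies on some path $P(a,b)$ with $a,b\in\Pi'\subseteq\Pi$, so $v\in\CH(\Pi)\setminus\Pi$ has degree two there, and since $\CH(\Pi')$ is a subtree of the tree $\CH(\Pi)$ containing $v$ as an internal (non-leaf, being on $P(a,b)$) vertex, $v$ has degree two in $\CH(\Pi')$ as well.

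**The core counting argument.**
Now I would count the connected components of $\CH(\Pi)\setminus\CH(\Pi')$ via an Euler-characteristic / edge-boundary argument on trees. Both $\CH(\Pi)$ and $\CH(\Pi')$ are subtrees (connected subgraphs) of the tree $G$, and $\CH(\Pi')\subseteq\CH(\Pi)$. Removing a subtree $H'$ from a larger subtree $H$ of a tree yields a number of connected components equal to the number of edges of $H$ that have exactly one endpoint in $V(H')$ — call these the \emph{boundary edges}. So it suffices to show there are at most two boundary edges, equivalently that $V(\CH(\Pi'))$ touches the ``outside'' through at most two edges of $\CH(\Pi)$. The key structural input is that $\CH(\Pi')$ is a subtree spanned by $\{p_i,\dots,p_j\}$, and these are consecutive along a root path of a Steiner-closed tree. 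I expect that $\CH(\Pi')$, being Steiner-closed with leaf set contained in $\{p_i,\dots,p_j\}$, has a limited ``interface'' with the rest of $\CH(\Pi)$: a vertex $v\in V(\CH(\Pi'))$ that is incident to a boundary edge must have degree $\geq 3$ in $\CH(\Pi)$, hence (by Steiner-closedness of $\Pi$) must actually lie in $\Pi$, so $v\in\{p_i,\dots,p_j\}$.

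**Pinning down why only two vertices can be on the boundary.**
The remaining and main obstacle is to show that at most two of the vertices $p_i,\dots,p_j$ are incident to boundary edges, and each to at most one. Here I would use how preferred paths relate to the tree structure: the path $\Pi$ in $T$ from root to $p_j$ passes through $p_{i-1}$ then enters the subtree $T(p_i)$; by the search-tree property, $T(p_i)$ corresponds to a single connected component of $G$ after removing $p_0,\dots,p_{i-1}$, and this component is exactly where all of $\{p_i,\dots,p_j\}$ — hence all of $\CH(\Pi')$ — lives. So $\CH(\Pi')$ attaches to $\CH(\Pi)\setminus\CH(\Pi')$ only via the ``cut'' created by $p_{i-1}$, which contributes essentially one boundary edge going ``up''; a second can appear because $\CH(\Pi)$ may re-enter the region near $p_i$ on the far side. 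I would make this precise by locating, within $\CH(\Pi)$, the vertex of $\CH(\Pi')$ closest to $p_0$ (call it the \emph{attachment vertex}): every path from a vertex outside $\CH(\Pi')$ to a vertex inside it must pass through this attachment vertex, except possibly for one extra branch, because $\CH(\Pi')$ being Steiner-closed forces its degree-$\geq 3$ vertices into $\Pi'$, and consecutiveness of $p_i,\dots,p_j$ along the $T$-path limits how $\CH(\Pi)$ can wrap around $\CH(\Pi')$. Carefully bounding this to ``at most $2$'' — rather than something depending on the degrees — is the delicate point; I anticipate the cleanest route is: show any boundary edge is incident to the attachment vertex $a^\star$ or to $p_j$ (the ``deep'' end), and that $a^\star$ contributes at most one such edge on the side away from $\CH(\Pi')$ while the branch through $p_j$ contributes at most one more, giving the bound of $2$. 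Then $\CH(\Pi)\setminus\CH(\Pi')$ has at most $2$ connected components, as claimed.
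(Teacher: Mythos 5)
Your setup and step 2 are sound: you correctly observe that $\Pi'$ is a Steiner-closed set (this uses that $\CH(\Pi')\cap\Pi''=\emptyset$, which follows from the search-tree property exactly as you sketch), and that the number of components of $\CH(\Pi)\setminus\CH(\Pi')$ equals the number of edges of $\CH(\Pi)$ with exactly one endpoint in $\CH(\Pi')$. A small imprecision: a vertex $v\in\CH(\Pi')$ incident to a boundary edge need not have degree $\geq 3$ in $\CH(\Pi)$ — it could be a leaf of $\CH(\Pi')$, where the degree is only $2$ — but the conclusion $v\in\Pi'$ still holds in that case because leaves of a convex hull lie in the generating set, so this doesn't hurt you.

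The gap is in step 3, and you essentially flag it yourself. Knowing that all boundary-edge endpoints inside $\CH(\Pi')$ lie in $\Pi'$ does not cap their number at two: a single vertex $p_\ell\in\Pi'$ can be incident to several boundary edges, and several vertices of $\Pi'$ can each carry boundary edges, without this violating the Steiner-closedness of $\Pi$ or of $\Pi'$ (those vertices are allowed to have high degree precisely because they are in $\Pi$). The claim that ``any boundary edge is incident to the attachment vertex $a^\star$ or to $p_j$'' is not justified and is not true in general; the set $\{p_i,\ldots,p_j\}$ need not be a path or even contiguous in $G$, so $\CH(\Pi'')$ can abut $\CH(\Pi')$ at intermediate vertices of $\Pi'$ as well. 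The ingredient you are missing is the Steiner-closedness of the complementary prefix $\Pi''=p_0,\ldots,p_{i-1}$. The paper's proof goes directly there: assume for contradiction there are three components $C_1,C_2,C_3$, take the vertices $c_1,c_2,c_3$ in these components adjacent to $\CH(\Pi')$; because $\CH(\Pi)=\CH(\Pi')\cup\CH(\Pi'')$ (or else there is only one component), each $c_k\in\CH(\Pi'')$; the first divergence $v$ of the tree paths $P(c_1,c_2)$ and $P(c_1,c_3)$ lies in $\CH(\Pi')$ (so $v\notin\Pi''$) yet has three $\CH(\Pi'')$-edges incident to it, i.e.\ degree $\geq 3$ in $\CH(\Pi'')$ — contradicting that $\Pi''$ is Steiner-closed. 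Your edge-counting framing is a valid restatement, but without invoking $\Pi''$ it does not close.
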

\begin{proof}
Let $\Pi'' = p_0,\ldots, p_{i-1}$. Since $T$ is Steiner-closed, we note that by definition, $\CH(\Pi'')$ is a Steiner-closed set with respect to $G$. For sake of a contradiction, suppose that removing $\CH(\Pi')$ from $\CH(\Pi)$ in $G$ results in at least 3 connected components. Let $C_1, C_2$ and $C_3$ be 3 of these components. Since $\CH(\Pi')$ is a subtree of $\CH(\Pi)$ let $c_ic_i'$ with $i\in\{1,2,3\}$ be the cut edges that connect $C_i$ to $\CH(\Pi')$ with $c_i\in C_i$ and $c_i'\in \CH(\Pi')$. Let $P(c_1,c_2)$ be the path in $\CH(\Pi)$ from $c_1$ to $c_2$ and $P(c_1,c_3)$ the path from $c_1$ to $c_3$. Let $v$ be the first vertex where $P(c_1,c_2)$ and $P(c_1,c_3)$ diverge. Note that $v \not \in \Pi''$. However, $v\in \CH(\Pi'')$ since $c_1, c_2$ and $c_3$ are in $\Pi''$. Moreover, $v$ has degree at least 3 in $\CH(\Pi'')$ which contradicts the fact that $\Pi''$ is a Steiner-closed set in $G$.
\end{proof}


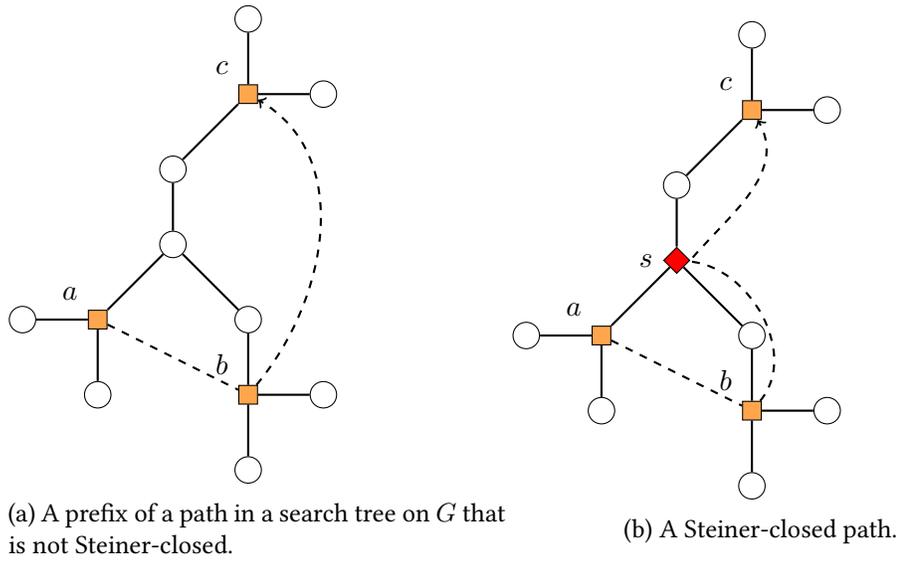
\begin{figure}
\centering
\begin{subfigure}[c]{0.4\textwidth}
  \begin{tikzpicture}[scale=1,auto,swap]
    \foreach \pos/\name in {{(1,3)/a}, {(2,2)/b}, {(4,1)/e}, {(5,2)/h}, {(3,5)/i}, {(4,3)/g}, {(3,4)/d}, {(4,7)/k}, {(5,6)/l}}
        \node[vertex] (\name) at \pos {};
    \foreach \pos/\name/\lab in {{(2,3)/c/a}, {(4,2)/f/b}, {(4,6)/j/c}} \node[svertex,label=above left:$\lab$] (\name) at \pos {};
    \foreach \source/\dest in {a/c,c/b,c/d,d/g,g/f,f/e,f/h,d/i,i/j,j/k,j/l} \path[edge] (\source) -- (\dest);
    \path[dashed oriented edge] (c) to (f) to [out=50,in=-30] (j);
\end{tikzpicture}
\caption{A prefix of a path in a search tree on $G$ that is not Steiner-closed.}
\label{fig:path}
\end{subfigure}
\begin{subfigure}[c]{0.4\textwidth}
  \begin{tikzpicture}[scale=1,auto,swap]
    \foreach \pos/\name in {{(1,3)/a}, {(2,2)/b}, {(4,1)/e}, {(5,2)/h}, {(3,5)/i}, {(4,3)/g}, {(4,7)/k}, {(5,6)/l}} \node[vertex] (\name) at \pos {};
    \node[steiner,label=left:$s$] (d) at (3,4) {};
    \foreach \pos/\name/\lab in {{(2,3)/c/a}, {(4,2)/f/b}, {(4,6)/j/c}} \node[svertex,label=above left:$\lab$] (\name) at \pos {};
    \foreach \source/\dest in {a/c,c/b,c/d,d/g,g/f,f/e,f/h,d/i,i/j,j/k,j/l} \path[edge] (\source) -- (\dest);
    \path[dashed oriented edge] (c) -- (f) to [out=50,in=-5] (d) to [out=50,in=-60] (j);
\end{tikzpicture}
\caption{A Steiner-closed path.}
\label{fig:steinerpath}
\end{subfigure}
\caption{Steiner-closed paths.}
\end{figure}

\begin{figure}
  \centering
  \begin{subfigure}[c]{0.4\linewidth}
   \begin{tikzpicture}[scale=1,auto,swap]
    \foreach \pos/\name in {{(1,3)/a}, {(2,2)/b}, {(2,3)/c}, {(4,2)/f}, {(4,6)/j}, {(3,4)/d}, {(4,1)/e}, {(5,2)/h}, {(3,5)/i}, {(4,7)/k}, {(5,6)/l}}
        \node[vertex, label=above left:$\name$] (\name) at \pos {};
    \node[vertex, label=above right:$g$] (g) at (4,3) {};
    \foreach \source/\dest in {a/c,c/b,c/d,d/g,g/f,f/e,f/h,d/i,i/j,j/k,j/l} \path[edge] (\source) -- (\dest);
   \end{tikzpicture}
   \caption{A tree $G$.}
   \label{fig:treeG3}
  \end{subfigure}\quad
  \begin{subfigure}[c]{0.4\linewidth}
   \begin{tikzpicture}[scale=1,auto,swap]
    \foreach \pos/\name in {{(1,5)/a}, {(2,5)/b}, {(2,6)/c}, {(3,5)/f}, {(3,3)/g}, {(2,4)/e}, {(3,4)/h}, {(5,3)/i}, {(5,0)/l}, {(5,2)/k}}
        \node[svertex, label=left:$\name$] (\name) at \pos {};
    \foreach \pos/\name in {{(4,4)/d}, {(5,1)/j}} \node[steiner, label=left:$\name$] (\name) at \pos {};    
    \foreach \source/\dest in {a/c,b/c,f/c,e/f,h/f,d/f,g/d,d/i,i/k,k/j,l/j} \path[edge] (\source) -- (\dest);
   \end{tikzpicture}
   \caption{A Steiner-closed search tree.}
   \label{fig:steinertree}       
  \end{subfigure}
  \caption{Steiner-closed search tree obtained from the search tree in Figure~\ref{fig:searchtree}.}
\end{figure}
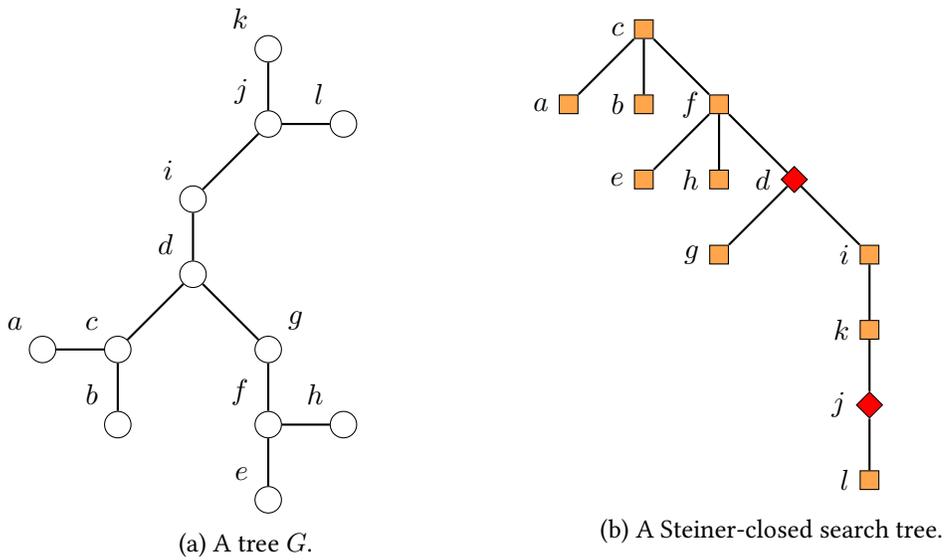

\vomit{
Note that for a set $\lbrace a,b \rbrace$ we have that $\CH(\lbrace a,b \rbrace) = P(a,b)$. Next, we define the notion of a Steiner vertex.

\begin{defn}[Steiner vertex]
\label{def:steiner}
For three vertices $a, b$ and  $c$ in G, let $\St(a, b, c)$ be $P(a, b)\cap P(b,c)\cap P(a, c)$. 
More generally, for a set $S$ of vertices in $G$ such that $|S| \geq 2$, and vertex $a$ in $G$, define $P(a, S) = \cap_{b\in S} P(a, b)$ and $\St(a, S)$ to be $P(a, S)\cap \CH(S)$.
\end{defn}

Note that $\St(a,S)$ is always a unique vertex, the one where the path $P(a,S)$ meets the convex hull of $S$. 
The definitions are illustrated on Figure~\ref{fig:steiner}.

\begin{lemma}
\label{lem:steiner_ch}
$\St(v,S) = v$ iff $v \in \CH(S)$.  
\end{lemma}
\begin{proof}
($\Rightarrow$) If $v \notin \CH(S)$, then clearly $\St(v,S) \neq a$, since by definition $\St(v,S) \in \CH(S)$. Thus $\St(v,S) = v \Rightarrow v \in \CH(S)$.
($\Leftarrow$) If $v \in \CH(S)$ we have that $P(v,S) = v$. Since 
\end{proof}

We now define the notion of a \Steiner\ set.

\begin{defn}[Steiner-closed]
\label{def:steiner-closed}
Define a set $S$ of vertices with $|S| \geq 3$ to be \emph{Steiner-closed} for $G$ if for all
$a,b,c$ in $S$, $\St(a,b,c)\in S$. A search tree $T$ is said to be
Steiner-closed if the set of vertices on every path from the root of $T$
to any node $v$ is Steiner-closed.
\end{defn} 

These definitions are illustrated on Figures~\ref{fig:steinerpath} and \ref{fig:steinertree}.

\begin{obs}
\label{steiner_add_ch}
Let $S$ be a \Steiner\ set of vertices of a tree $G$. Then, for any vertex $u \in \CH(S)$, we have that $S \cup \lbrace u \rbrace$ is also Steiner-closed.
\end{obs}

This comes from the fact that due to Lemma~\ref{lem:steiner_ch} $\St(u,S) = u \in S \cup \lbrace u \rbrace$. 

\begin{obs}
\label{obs:non-closed-vertex}
Let $S$ be a \Steiner\ set of vertices of a tree $G$ and $v$ be a vertex of $G$ such that $S \cup \lbrace v \rbrace $ is not \Steiner. Then $\St(v,S) \in \CH(S) \setminus S$ 
\end{obs}

This comes since by definition of Steiner vertices we have that $\St(v,S) \in \CH(S)$ and the fact that if $\St(v,S) \in S$, then $S \cup \lbrace v \rbrace$ is \Steiner.

\begin{lemma}
\label{lem:steiner_add_node}
Let $S$ be a \Steiner\ set of nodes of a tree $G$ and $v$ be a node of $G$ such that $S \cup \lbrace v \rbrace $ is not \Steiner. Let $s = \St(v,S) $. The set $ S \cup \lbrace s \rbrace \cup \lbrace v \rbrace  $ is \Steiner.
\end{lemma}

\begin{proof}
Due to Observation~\ref{obs:non-closed-vertex} we have that $s \in \CH(S) \setminus S$. Due to Observation~\ref{steiner_add_ch}, we get that the set $S \cup \lbrace s \rbrace$ is \Steiner . By definition of $s$, we get that for any $a,b$ in $S$ such that $\St(a,b,v) \notin S$, we have that $\St(a,b,v) = s $. Moreover, for each $a \in S$, we have that $\St(a,s,v) = s$. Since $s \in (S \cup \lbrace s \rbrace \cup \lbrace v \rbrace) $, we get that $S \cup \lbrace s \rbrace \cup \lbrace v \rbrace$ is \Steiner.
\end{proof}

\paragraph{{\normalfont{\textbf{Adjacent vertices.}}}}
Next, we define the notion of adjacent vertices with respect to a set $S$.  For two nodes $a$ and $b$ of tree $G$, let $\bar{P}(a,b)$ be the set of nodes $v \neq \lbrace a,b \rbrace$ of the path from $a$ to $b$.

\begin{defn}[Adjacent vertices of a set $S$]
\label{def:adj}
Let $S \subset V$ be a set of vertices of a tree $G$. For any vertex $a \in V \setminus S$ we define the set of \textit{adjacent vertices} with respect to $S$, denoted by $\adj(a,S)$ to be all vertices $b \in S$ such that there is no node $c \in S$ in the path $\bar{P}(a,b)$. 
\end{defn}

\begin{lemma}
\label{lem:steiner_adj}
Let $S$ be \Steiner\ set of nodes of a tree $G$ and $v$ be a node of $G$ such that $S \cup \lbrace v \rbrace $ is not \Steiner. Let $s = \St(v,S) $. We have that $\adj(v,S) = \adj(s,S)$.
\end{lemma}

\begin{proof}
By Observation~\ref{obs:non-closed-vertex}, $s \in \CH(S) \setminus S $, thus $\adj(s,S)$ is well-defined. Moreover, by definition $s \in \cap_{b \in S} P(v,b)$. That means, there is no vertex $c \in S$ in the path $\bar{P}(v,s)$. This implies that $s$ and $v$ have the same set of adjacent vertices with respect to $S$.
\end{proof}

\begin{obs}
\label{obs:connected_adj}
Let $S \subset V$ be a set of vertices of $G$. Consider the induced subgraph $G[V\setminus S]$ obtained by removing all vertices of $S$ and their incident edges. A set of vertices $V' \subset V \setminus S $ such that $\adj(v,S) = S'$ for some $S' \subset S$ forms a connected component of $G[V\setminus S]$.
\end{obs}

The following lemma is the main property we need to be able to implement the auxiliary tree data structure. 

\begin{lemma}
\label{lem:steiner_adj_atmost2}
For a \Steiner\ set $S$, each node in $V \setminus S$ has at most two adjacent nodes in $S$. 
\end{lemma}

\begin{proof}
First mention that for all nodes $v$ such that $v \notin \CH(S)$ and $\St(v,S) \in S$, we have that $|\adj(v,S)| = 1$ and in particular $\adj(v,S) = \St(v,S)$. This is easy to see.

It remains to prove it for nodes $v \in \CH(S) \setminus S $ and for nodes $v \notin \CH(S)$ such that $\St(v,S) \in \CH(S) \setminus S$.

Let us first consider the case  $v \in \CH(S) \setminus S $. Assume $|\adj(v,S)| \geq 3$. Take three vertices $(a,b,c) \in \adj(v,S)$. Let $P(v,\lbrace a,b,c \rbrace) = P(v,a) \cup P(v,b) \cup P(v,c)$. We have that:
\begin{enumerate}[(i)]
\item For $s = \St(a,b,c)$ by definition $s \in \CH(\lbrace a,b,c \rbrace)$ and $s \in S$ since $S$ is \Steiner. Thus $s \in \CH(\lbrace a,b,c \rbrace) \cap S $. 
\item $\CH(\lbrace a,b,c \rbrace) \subseteq P(v,\lbrace a,b,c \rbrace)$. Thus $s \in P(v,\lbrace a,b,c \rbrace) \cap S $.
\item Since all $a,b,c$ are adjacent to $v$, no vertex in $P(v,\lbrace a,b,c \rbrace) \setminus \lbrace a,b,c \rbrace$ is in $S$. In other words, $ P(v,\lbrace a,b,c \rbrace) \cap S = \lbrace a,b,c \rbrace  $.
\end{enumerate}

The above together imply that $\St(a,b,c) \in \lbrace a,b,c \rbrace$. Without loss of generality, let $\St(a,b,c) = b$. That means $b$ is in the path between $a$ and $c$. This contradicts the fact that all $a,b,c$ are adjacent to $v$; assume $v$ is adjacent to $a$ and $b$, this implies $b$ is in the path between $v$ and $c$, hence that $c$ is not adjacent to $a$, a contradiction.

It remains to consider the case $v \notin \CH(S)$ such that $\St(v,S) \in \CH(S) \setminus S$. Let $s = \St(v,S)$. Clearly $s \in CH(S)$, thus the discussion above implies that $ |\adj(s,S)| \leq 2 $. Due to Lemma~\ref{lem:steiner_adj} we get that $\adj(v,S) = \adj(s,S)$, hence $|\adj(v,S)| \leq 2$.
\end{proof}
}
\paragraph{A static \Steiner\ tree.}
Our next lemma shows that given any arbitrary search tree $T$ on a tree $G$, we can transform it to a \Steiner\ tree $T'$ of height at most twice the height of $T$.

\begin{lemma}
\label{lem:steiner_ref}
Given a valid search tree $T$ on a tree $G$, we can create another valid search tree $T'$ of $G$, such that $T'$ is \Steiner\ and $\height(T') \leq 2 \height(T)$ . 
\end{lemma}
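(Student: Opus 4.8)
The plan is to build $T'$ top-down, maintaining as an invariant that the set of nodes on any root-to-node path constructed so far is Steiner-closed. I will recurse on a pair $(C, A)$ where $C$ is a connected subtree of $G$ still to be "search-treed" and $A$ is the (small) set of already-placed ancestors that are adjacent to $C$ in the relevant convex-hull sense; the key point is that $|A|\le 2$ throughout. The base step mirrors the original definition: take the root $r$ of $T$, place it as the root of $T'$, and recurse into the components of $G\setminus r$. The twist is that before recursing into a component, I must augment it so that the resulting path stays Steiner-closed: when I would like to pick the next vertex $v$ (following $T$'s recursion) and the current ancestor set $S$ together with $v$ fails to be Steiner-closed, I first insert the Steiner vertex $s=\St(v,S)$ (which lies in $\CH(S)\setminus S$) as an extra node on the path above $v$, and only then place $v$. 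By Lemma~\ref{lem:steiner_add_node}, $S\cup\{s\}\cup\{v\}$ is again Steiner-closed, so the invariant is restored, and by Lemma~\ref{lem:steiner_adj_atmost2} the "branching" caused by this insertion is bounded (the component $G\setminus(S\cup\{s\})$ splits into pieces each adjacent to at most two placed ancestors), which is what keeps the construction well-defined as a valid search tree on $G$.

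The validity of $T'$ as a search tree on $G$ follows by construction: every node we place is chosen from the current connected component $C$ of $G$ minus the already-placed vertices, and its children-subtrees are built recursively on the connected components of $C$ minus that node — exactly matching the recursive definition of a search tree on $G$. The inserted Steiner vertices $s$ are also vertices of $G$ and are inserted as genuine nodes of $C$ minus its placed ancestors, so no vertex is used twice and every vertex of $G$ ends up placed exactly once. The Steiner-closed property of $T'$ is precisely the maintained invariant, applied to the moment each leaf is reached.

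For the height bound, the idea is a charging argument: each original node of $T$ at depth $k$ in $T$ gives rise to a node of $T'$ whose depth is at most $2k$, because along any root-to-leaf path we insert at most one Steiner vertex immediately before each "real" (i.e., $T$-inherited) node. To see the one-per-step bound, note that after inserting $s=\St(v,S)$ we have $s$ on the path and $s\in\CH(S)$, so by Observation~\ref{steiner_add_ch} the augmented set is already Steiner-closed; thus the very next real vertex $v$ can be added with at most one further Steiner insertion, and inductively each real step costs at most two path-edges. Formalizing this requires checking that once we descend into a sub-component the set $S$ restricted to that sub-component still behaves well — this is where Observation~\ref{obs:connected_adj} and Lemma~\ref{lem:steiner_adj} are used to argue that the "relevant" ancestor set for a sub-component is again Steiner-closed and of size at most two, so the induction goes through cleanly.

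The main obstacle I expect is not the height count but verifying that the recursive construction is genuinely well-defined and produces a search tree on $G$: one has to be careful that when a Steiner vertex $s$ is inserted, the component $C$ splits in a controlled way and the two (or fewer) resulting pieces can each be handled by a single recursive call with an updated ancestor set — in other words, that inserting $s$ does not force us to "re-root" already-constructed parts or violate the partition-into-components structure. This is exactly what Lemma~\ref{lem:steiner_adj_atmost2} is designed to handle, so the proof will lean heavily on it; the rest (validity and the factor-two height bound) is then a routine induction.
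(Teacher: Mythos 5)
Your construction is essentially the paper's: whenever extending the current root-to-node path by the next $T$-vertex $v$ would violate Steiner-closure, you insert the unique witnessing Steiner vertex $s = \St(v,S)$ first, so each original step contributes at most one extra node on the path and the height at most doubles. The paper realizes exactly this insertion by a DFS over $T$ together with rotations that lift $s$ up to sit between $p(v)$ and $v$, which has the small bonus that validity of $T'$ is automatic (rotations preserve the search-tree property); your top-down recursive construction reaches the same $T'$ and argues validity directly, which is fine. One caveat worth flagging: the $|A|\le 2$ bounded-adjacency invariant you carry along is superfluous for this lemma --- the recursive search-tree construction and the one-Steiner-insertion-per-step height count are well-defined regardless of how many components a Steiner insertion creates; that bounded-adjacency fact is what powers a different part of the paper (Lemma~\ref{lem:jit} and the link-cut tree maintenance of preferred paths), not the existence of a Steiner-closed $T'$ of doubled height.
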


\begin{proof}
We show how to transform an arbitrary valid search tree $T$ of a tree $G$ into a \Steiner\ search tree $T'$ where the depth of any node in $T'$ is at most
twice its depth in $T$. We perform a depth-first search on $T$ and build our \Steiner\ tree incrementally. Let $r$ be the root of $T$. For any non-root node $v$ with parent $p(v)$, let $S_v$ be the set of elements in the path from the root $r$ to $v$. At each step $i$ we transform the tree $T_i$ into the tree $T_{i+1}$ such that $T_0 = T$ and $T_{final} = T' $. 

We describe one step of our transformation. Suppose the current tree is $T_i$ and our DFS visits a node $v \in T$ such that: (i) the set $S_{p(v)}$ is Steiner-closed in $G$ and (ii) the set $S_v$ is not Steiner-closed. This means that $\CH(S_v)$ contains a unique vertex $s\in \CH(S_v)\setminus S$ with degree at least 3. Observe that the vertices on the path between $p(v)$ and $v$ in $G$ are contained in the subtree rooted at $v$ in $T_i$. Since $s$ is on this path, it is in the subtree rooted at $v$ in $T_i$. We obtain $T_{i+1}$ by rotating $s$ up the tree until it is between $p(v)$ and $v$, that is, we make it a parent of $v$ and a child of $p(v)$. Note that in $T_{i+1}$ the path from $v$ up to root is now \Steiner\  by construction. We then continue our DFS traversal from node $s$.

Let $v_s$ be the child of $v$ in $T_i$ such that $s$ is in the subtree rooted at $v_s$. Let also $v_1,\dotsc,v_d$ be the other children of $v$. Since we perform single rotations to move $s$ up until it becomes a parent of $v$, it is easy to see that by switching from $T_i$ to $T_{i+1}$, the depth of all nodes in subtrees rooted at $v_1,\dotsc,v_d$ increases by 1 and the depth of all other nodes of $T_i$ does not increase. Thus, for a node $w$ in a subtree rooted at $v_j$, $1 \leq j \leq d$, we have that the depth of $w$ increases by 1, the new root-to-$w$ path is the same as before augmented by node $s$ and the path from root to $v$ is Steiner-closed. This implies that all possible depth increases of $w$ will be caused by nodes in the path between $v$ and $w$. Summing overall changes, we get that for any node of the tree at depth $d$ in $T_{0}$, its depth can increase by 1 at most $d$ times, i.e., its depth at $T'$ is at most $2d$.

At the end of this process, $T'$ is \Steiner\ by construction. Since the depth of every node at most doubles, the height of $T'$ is at most twice the height of $T$. 
\end{proof}

\subsection{Building the Reference tree}
\label{sec:reftree}

To build our reference tree we use as a building block the centroid decomposition $C$ of $G$.

\begin{lemma}
\label{lem:centroid}
Given a tree $G$, there is a search tree $C$ of $G$ with height at most $\log_2 n +1$.
The tree $C$ is a centroid decomposition tree obtained by recursive application of Jordan's theorem~\cite{j1869,harary6graph}: 
Given a tree $G$ with $n$ vertices, there exists a vertex whose removal partitions the tree into components, each with at most $n/2$ vertices.
\end{lemma}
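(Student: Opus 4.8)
The plan is to prove Lemma~\ref{lem:centroid} by directly invoking Jordan's centroid theorem and recursing, exactly as the statement suggests. First I would recall Jordan's theorem: in any tree on $n\ge 1$ vertices there is a vertex $c$ (a \emph{centroid}) whose removal splits $G$ into connected components each of size at most $\lfloor n/2\rfloor$. I would then define the search tree $C$ recursively: take such a centroid $c$ as the root, and for each connected component $G_1,\dots,G_k$ of $G\setminus c$, attach as a subtree the recursively-constructed centroid decomposition tree of $G_i$. By construction this satisfies the definition of a valid search tree on $G$ (the root stores a vertex of $G$, and the subtrees are valid search trees on the components of $G\setminus c$), with the base case being a single vertex, which is trivially a valid search tree on a one-vertex tree.

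Next I would bound the height. Let $h(n)$ denote the maximum height (number of edges on a root-to-leaf path, or equivalently depth of the deepest node) of the tree produced by this procedure on any tree with at most $n$ vertices. Each component of $G\setminus c$ has at most $\lfloor n/2\rfloor$ vertices, so $h(n)\le 1+h(\lfloor n/2\rfloor)$ with $h(1)=0$. Unrolling this recurrence gives $h(n)\le \lfloor \log_2 n\rfloor \le \log_2 n$, so the number of vertices on any root-to-leaf path is at most $\log_2 n + 1$, which is the claimed height bound. (Depending on the paper's convention for ``height'' — edges versus vertices on the longest path — the additive $+1$ accounts for counting vertices rather than edges; I would state this convention explicitly.)

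There is essentially no serious obstacle here: the only thing to be slightly careful about is that ``height'' in this paper may be measured as the number of vertices on a root-to-leaf path rather than the number of edges, which is why the bound is $\log_2 n + 1$ rather than $\log_2 n$; I would make sure the recurrence and its solution are stated consistently with whichever convention Section~\ref{sec:model} uses. The one genuinely external ingredient is Jordan's theorem itself, which I would simply cite (as in the statement, to \cite{j1869,harary6graph}) rather than reprove, though a one-line reminder of why it holds — move toward the heaviest component repeatedly, and the process stabilizes at a vertex all of whose components have size $\le \lfloor n/2\rfloor$ — could be included for completeness.
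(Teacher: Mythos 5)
Your proof is correct and takes essentially the same route the paper intends: recursively pick a Jordan centroid, recurse into the components, and bound the height by the recurrence $h(n)\le 1+h(\lfloor n/2\rfloor)$. The paper in fact gives no separate proof for this lemma at all — the construction is folded into the lemma statement and the surrounding remarks — so your write-up is a faithful (and slightly more explicit) elaboration of the argument the authors had in mind.
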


The centroid decomposition of a tree $G$ can be seen as the generalization of balanced binary search trees with height $O(\log n)$. Figure~\ref{fig:centroid_decomposition} shows an example. A centroid decomposition can be computed in time $O(n \log n)$, since we need $O(n)$ to find the centroid of a tree on $n$ vertices.
Using Lemma~\ref{lem:centroid} and Lemma~\ref{lem:steiner_ref} we get the following corollary.

\begin{figure}
  \centering
  \begin{subfigure}[t]{0.4\linewidth}
   \begin{tikzpicture}[scale=.8,auto,swap]
    \foreach \pos/\name in {{(1,3)/a}, {(2,2)/b}, {(2,3)/c}, {(4,2)/f}, {(4,6)/j}, {(3,4)/d}, {(4,1)/e}, {(5,2)/h}, {(3,5)/i}, {(4,7)/k}, {(5,6)/l}}
        \node[vertex, label=above left:$\name$] (\name) at \pos {};
    \node[vertex, label=above right:$g$] (g) at (4,3) {};
    \foreach \source/\dest in {a/c,c/b,c/d,d/g,g/f,f/e,f/h,d/i,i/j,j/k,j/l} \path[edge] (\source) -- (\dest);
   \end{tikzpicture}
   \caption{A tree $G$.}
   \label{fig:treeG2}
  \end{subfigure}\quad
  \begin{subfigure}[t]{0.5\linewidth}
   \begin{tikzpicture}[scale=.8,auto,swap]
    \foreach \pos/\name in {{(1,1)/a}, {(3,1)/b}, {(2,2)/c}, {(4,1)/e}, {(5,2)/f}, {(5,1)/g}, {(6,1)/h}, {(7,1)/i}, {(8,1)/k}, {(9,1)/l}}
       \node[svertex, label=left:$\name$] (\name) at \pos {};
    \node[svertex, label=above left:$d$] (d) at (5,3) {};
    \node[svertex, label=above left:$j$] (j) at (8,2) {};
    \foreach \source/\dest in {a/c,b/c,c/d,d/f,e/f,g/f,h/f,i/j,d/j,j/k,j/l} \path[edge] (\source) -- (\dest);
   \end{tikzpicture}
   \caption{The centroid decomposition of $G$.}
   \label{fig:centroidtree}       
  \end{subfigure}
  \caption{Centroid decomposition.}
  \label{fig:centroid_decomposition}
\end{figure}
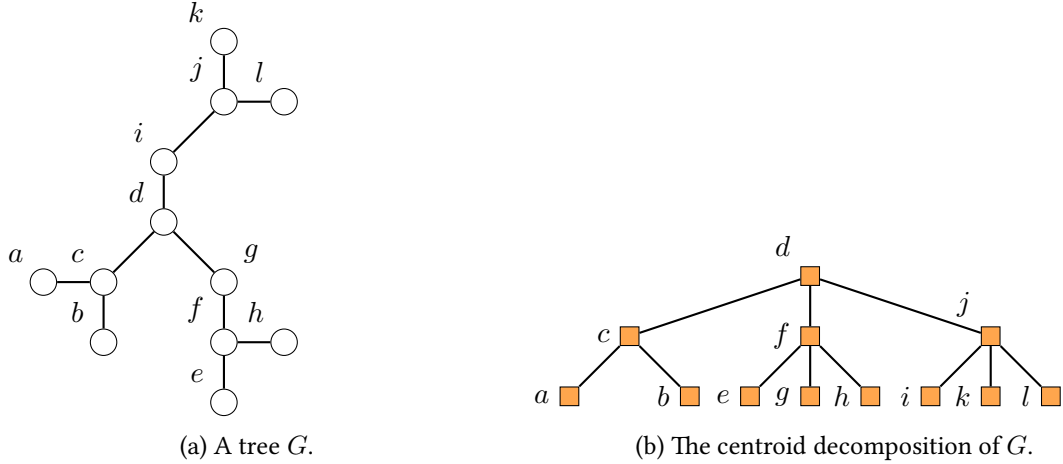

\begin{corol}
\label{ref:cor-reference}
For any tree $G$ on $n$ vertices, there exists a valid search tree $P$ on $G$ which is \Steiner\ and it has height at most $2 \log n +2$.
\end{corol}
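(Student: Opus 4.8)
The plan is to simply compose Lemma~\ref{lem:centroid} with Lemma~\ref{lem:steiner_ref}. First I would invoke Lemma~\ref{lem:centroid} to obtain the centroid decomposition tree $C$, which is a valid search tree on $G$ with $\height(C) \le \log_2 n + 1$. This is the "balanced" building block: its height is logarithmic, which is the property we ultimately need for the reference tree.

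Next I would feed $C$ into Lemma~\ref{lem:steiner_ref}, which transforms any valid search tree $T$ on $G$ into a \Steiner\ search tree $T'$ on $G$ with $\height(T') \le 2\,\height(T)$. Applying this with $T = C$ yields a valid search tree $P := C'$ on $G$ that is \Steiner, and whose height satisfies $\height(P) \le 2\,\height(C) \le 2(\log_2 n + 1) = 2\log_2 n + 2$. Setting $P$ to be this tree gives exactly the claimed statement, so the corollary follows immediately.

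There is essentially no obstacle here; the statement is a one-line corollary and all the work has already been done in the two cited results. The only thing to be mildly careful about is bookkeeping with the $\log$ base (the paper writes $\log n$, interpreted as $\log_2 n$) and with the additive constant, making sure the "$+1$" in $\height(C) \le \log_2 n + 1$ doubles to "$+2$" rather than being lost. If one also wants the running-time remark, one notes that $C$ is computable in $O(n \log n)$ time and the \Steiner-ification procedure of Lemma~\ref{lem:steiner_ref} is a single DFS with rotations, so $P$ can be constructed in polynomial time; but since the reference tree is used only in the analysis and need not be stored, even that is not strictly required.

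\begin{proof}
By Lemma~\ref{lem:centroid}, there is a valid search tree $C$ on $G$ with $\height(C) \le \log_2 n + 1$. Applying Lemma~\ref{lem:steiner_ref} to $C$, we obtain a valid search tree $P$ on $G$ that is \Steiner\ and satisfies $\height(P) \le 2\,\height(C) \le 2(\log_2 n + 1) = 2\log n + 2$.
\end{proof}
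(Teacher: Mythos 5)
Your proof is correct and follows exactly the same route as the paper: apply Lemma~\ref{lem:centroid} to get the centroid decomposition $C$ of height at most $\log_2 n + 1$, then feed it into Lemma~\ref{lem:steiner_ref} to obtain a \Steiner\ tree $P$ with height at most $2\log_2 n + 2$. The bookkeeping on the base of the logarithm and the additive constant is also handled correctly.
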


This is because the centroid decomposition $C$ has height $\log n +1$, thus the tree $P$ can be obtained by applying Lemma~\ref{lem:steiner_ref} for $T=C$. The tree $P$ will be our reference tree in the rest of this paper. We need to show that each preferred path of $P$ can be stored in a search tree. To do that, we first observe that all preferred paths of $P$ are \Steiner. 

\begin{obs}
\label{obs:Steiner_subtree}
If a search tree $T$ of a tree $G$ is \Steiner, then for all nodes $v$ in $T$, the subtree $T_v$ rooted at $v$ is also \Steiner. 
\end{obs}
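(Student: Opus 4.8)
The plan is a direct unwinding of the definitions. First I would pin down what it means for $T_v$ to be Steiner-closed: by iterating the recursive definition of a search tree, $T_v$ is a valid search tree on the tree $G_v := G[V(T_v)]$, and $G_v$ is a \emph{connected} subtree of $G$ (at every level of the recursion the subtrees of a search tree span connected components). So the goal is to show that for every node $w$ of $T_v$, the vertex set $B$ of the path from $v$ to $w$ in $T_v$ is Steiner-closed with respect to $G_v$. Note that $B$ is also the $v$-to-$w$ portion of the root-to-$w$ path $A$ of $T$, so $B \subseteq A$.

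The one geometric fact I would isolate is that convex hulls are insensitive to restricting to a connected subtree: since $G_v$ is connected in the tree $G$, the $G$-path between any two vertices of $G_v$ stays inside $G_v$, so for any $S \subseteq V(G_v)$ we have $\CH_{G_v}(S) = \CH_G(S)$ as induced subgraphs, and in particular $\CH_G(S) \subseteq V(G_v) = V(T_v)$. Apply this with $S = B$. Now take any $u \in \CH_{G_v}(B) \setminus B = \CH_G(B) \setminus B$. The vertices on $A$ form a chain from the root of $T$ down to $w$, and those of them lying in $V(T_v)$ are exactly the ones that are descendants of $v$, i.e.\ the sub-chain from $v$ to $w$, which is $B$; hence $A \cap V(T_v) = B$. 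Since $\CH_G(B) \subseteq V(T_v)$ and $u \notin B$, this forces $u \notin A$, so $u \in \CH_G(A) \setminus A$. Steiner-closedness of $T$ now gives $\deg_{\CH_G(A)}(u) = 2$, and monotonicity of the convex hull ($B \subseteq A$ implies $\CH_G(B) \subseteq \CH_G(A)$) gives $\deg_{\CH_G(B)}(u) \le 2$. Conversely, $u \in \CH_G(B)\setminus B$ means $u$ lies strictly inside some $G$-path between two vertices of $B$, so its two neighbours on that path belong to $\CH_G(B)$, whence $\deg_{\CH_G(B)}(u) \ge 2$. Therefore $\deg_{\CH_G(B)}(u) = 2$, i.e.\ $B$ is Steiner-closed with respect to $G_v$, and since $w$ was arbitrary, $T_v$ is a Steiner-closed search tree of $G_v$.

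There is no genuine obstacle here; this is a short definition-chase, and I would expect it to appear as a one-paragraph proof or be left to the reader. The only point that needs a moment of care is the step showing that a vertex of $\CH_G(B)$ outside $B$ cannot lie on the ancestor path $A$ — this is exactly where connectivity of $G_v$ (hence $\CH_G(B) \subseteq V(T_v)$) is used — together with the elementary fact that an interior vertex of a path has degree at least two in any subgraph that contains the whole path.
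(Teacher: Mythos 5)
Your argument is correct and follows the same route as the paper, which dispatches this observation in two sentences by noting that $T_v$ is a search tree on a connected component of $G$ minus the root-to-parent$(v)$ path and that Steiner-closedness of $T$ then transfers; your write-up simply fills in the definition-chase (the identification $A \cap V(T_v) = B$, the invariance of convex hulls under restriction to a connected subtree, and the degree-exactly-two check) that the paper leaves implicit.
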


This comes from the fact that the subtree rooted at $v$ contains a connected component obtained after the removal of vertices in the path from the root of $T$ to the parent of $v$. Since $T$ is \Steiner, then  $T_v$ should also be \Steiner. This observation implies that all preferred paths of $P$ are \Steiner. The next lemma shows that a \Steiner\ set $S$ corresponds to a tree on $S$ which is a obtained by contracting edges of $G$. 
Let $\bar{P}(a,b)$ denote the set of nodes $v \neq \lbrace a,b \rbrace$ of the path from $a$ to $b$.

\begin{defn}
\label{def:GS}
For a Steiner-closed set of vertices $S$ of $G$, let $G(S)$ to be the graph with vertex set $S$ where two vertices
$a,b\in S$ are connected by an edge if and only if no $c\in S$ is in $\bar{P}(a,b)$.
\end{defn}

Our next Lemma directly follows from Definition \ref{def:steiner-set}. 
\begin{lemma}
\label{lem:search_tree_preferred}
For any \Steiner\ set $S$, $G(S)$ is a tree.
\end{lemma}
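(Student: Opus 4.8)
The plan is to exhibit $G(S)$ as the tree obtained from the subtree $\CH(S)$ of $G$ by suppressing (smoothing out) every degree-two vertex of $\CH(S)$ that does not belong to $S$; the Steiner-closed hypothesis, via Definition~\ref{def:steiner-set}, says precisely that these are \emph{all} the vertices of $\CH(S)\setminus S$, which is exactly what makes this work. I may assume $S\neq\emptyset$, the claim being trivial otherwise.

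First I would check that $\CH(S)$ is itself a tree. By definition it is the subgraph of $G$ induced on the vertex set $W=\bigcup_{a,b\in S}P(a,b)$, hence it is a subgraph of the forest $G$ and therefore a forest; it remains to see that $W$ is connected in $G$. Given $w_1\in P(a_1,b_1)$ and $w_2\in P(a_2,b_2)$ with $a_i,b_i\in S$, concatenate the sub-path of $P(a_1,b_1)$ from $w_1$ to $a_1$, the path $P(a_1,a_2)$, and the sub-path of $P(a_2,b_2)$ from $a_2$ to $w_2$; every vertex of this walk lies in $W$, and since $\CH(S)$ is an induced subgraph, every edge of the walk lies in $\CH(S)$, so $w_1$ and $w_2$ lie in the same component. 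Thus $\CH(S)$ is a connected forest, i.e.\ a tree, and for $a,b\in S$ the unique path between $a$ and $b$ in $\CH(S)$ is exactly $P(a,b)$, with internal vertex set $\bar P(a,b)$.

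Next I would run the suppression. By the Steiner-closed hypothesis each vertex $v\in\CH(S)\setminus S$ has degree exactly $2$ in $\CH(S)$, so I can repeatedly delete such a $v$ and join its two neighbours by a new edge. In a tree this is well defined and keeps the graph a tree: $v$'s two neighbours are distinct and, since $\CH(S)$ is a tree, not already adjacent, so one replaces two edges and one vertex by a single edge (connectivity and acyclicity are preserved), and the degree of every surviving vertex is unchanged, so every remaining vertex of $\CH(S)\setminus S$ stays suppressible regardless of the order chosen. After processing all of $\CH(S)\setminus S$ one is left with a tree $G'$ on vertex set $S$ (with $|S|-1$ edges).

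Finally I would identify $G'$ with $G(S)$: after the suppressions, two vertices $a,b\in S$ are adjacent in $G'$ if and only if the path between $a$ and $b$ in $\CH(S)$, namely $P(a,b)$, has all of its internal vertices in $\CH(S)\setminus S$, i.e.\ no $c\in S$ lies in $\bar P(a,b)$ --- which is precisely the adjacency rule of Definition~\ref{def:GS}. Hence $G(S)=G'$ is a tree. No step here is genuinely hard; the only points requiring care are that paths inside $\CH(S)$ coincide with paths in $G$ (so that the two notions of ``internal vertex'' agree) and that suppression never produces loops or parallel edges --- both immediate from $G$ being a tree --- together with the observation that the degree-two condition is exactly what guarantees that \emph{every} vertex of $\CH(S)\setminus S$ is removable (drop it, and e.g.\ for a star whose leaf set is $S$ one gets $G(S)=K_{|S|}$, which is not a tree).
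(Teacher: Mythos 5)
Your proof is correct. The paper in fact does not spell out a proof at all: it simply remarks that the lemma ``directly follows from Definition~\ref{def:steiner-set}'' and moves on. What you have done is turn that assertion into an actual argument, and your route is the natural one: exhibit $G(S)$ as the series reduction of the Steiner tree $\CH(S)$, suppressing in turn each vertex of $\CH(S)\setminus S$, all of which the Steiner-closed hypothesis guarantees have degree exactly two. Each suppression visibly preserves tree-hood (the two neighbours are distinct and non-adjacent because $\CH(S)$ is a tree), the order of suppressions is irrelevant because the degrees of surviving vertices never change, and the resulting adjacency relation on $S$ coincides with the definition of $G(S)$ because two vertices of $S$ end up adjacent precisely when no third vertex of $S$ lies on their $\CH(S)$-path. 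This is also the conceptually right picture: it makes explicit the claim that the paper states only informally earlier in the section, namely that $G(S)$ is a minor of $G$ (obtained by contracting the edges incident to the suppressed vertices). Your closing counterexample (a star whose leaves form $S$ gives $G(S)=K_{|S|}$) is a useful check that the degree-two hypothesis is genuinely needed. The only cosmetic remark is that you should state once that a Steiner-closed set is permitted to have $|S|\le 1$, in which case $G(S)$ is the empty graph or a single vertex and there is nothing to prove; you implicitly acknowledge this but it is worth saying explicitly since the suppression loop then runs zero times.
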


It now remains to show how to maintain the trees $G(S)$ corresponding to the preferred paths of $P$.

\subsection{Maintaining preferred paths with link-cut trees}
\label{subsec:aux}

 We now show that a collection of Steiner-closed preferred paths, each of which has length at most $k$, can be stored in a data structure supporting search, cut and merge at a cost of $O(\log k)$ in the GST model of computation. Our goal is to use this for the preferred paths of $P$ of length $k = O(\log n)$, to get the $O(\log k) = O(\log \log n)$ bound.

During an execution of a search sequence we need to perform the following operations on preferred paths:
\begin{enumerate}[(i)]
 \item Search for a node in a preferred path $\Pi$.
 \item Cut a preferred path $\Pi$ into two paths, one consisting of nodes of depth smaller than $d$ in $P$ and the other of nodes of depth at least $d$. We denote this operation $\cut(\Pi,d)$.
 \item Merge two preferred paths $\Pi_1$ and $\Pi_2$, where the bottom node of $\Pi_1$ is the parent of the top node of $\Pi_2$.
\end{enumerate}

Let $\Pi$ be a preferred path containing a \Steiner\ set of nodes $S$. In order to split $\Pi$ into two paths, we split $G(S)$ into two trees $G(S_1)$ and $G(S_2)$, where $S_1$ and $S_2$ are the nodes in $\Pi_1$ and $\Pi_2$. From Observation~\ref{obs:Steiner_subtree}, we know that the bottom part $\Pi_2$ is also \Steiner, which implies $G(S_2)$ is a tree. The path $\Pi_1$ is clearly \Steiner\, since $P$ is \Steiner.
In fact, from the \Steiner\ property and Lemma~\ref{lem:jit}, we have that $G(S_2)$ can be obtained from $G(S)$ by cutting at most two edges of $G(S)$.

Conversely, if we merge two preferred paths $\Pi_1$ and $\Pi_2$ into one path $\Pi$, where the bottom of $\Pi_1$ is connected to the top of $\Pi_2$ in $P$, we need to construct the tree $G(S)$, where $S$ is the union of the sets of nodes $S_1$ and $S_2$ in the paths $\Pi_1$ and $\Pi_2$. Again $G(S_2)$ can be merged with $G(S_1)$ to yield $G(S)$ by cutting $G(S_1)$ at at most two places, and linking the two trees by adding two edges.

\paragraph{Basic operations that need to be supported in logarithmic time.} 
We need to implement a data structure supporting the above operations on the forest of trees $G(S)$ at $O(\log k)$ cost in the GST model.
Each of these operations can be split into a constant number of one of these two operations: {\em cut} a tree into two by removing an edge, and {\em link} two trees into one by adding an edge. In what follows, we refer to the trees $G(S)$, for each set $S$ of nodes in a preferred path of $P$, as the {\em represented trees}.

Here we resort to the classical link-cut trees data structure from Sleator and Tarjan\cite{ST83}. Link-cut trees can be used to implement link and cut operations on a forest of unrooted trees in amortized time proportional to the logarithm of the size of the largest tree in the forest. We refer to Tarjan's textbook for details~\cite{T83}.



The link-cut tree data structure implements a heavy-path decomposition on the represented trees. Each heavy path in this decomposition is in turn represented by a splay tree~\cite{ST85}. Note that this decomposition should not be confused with the decomposition of $P$ into preferred paths; we are performing two levels of decomposition into paths. Link-cut trees effectively reduce the problem of maintaining the forest of trees to binary search tree operations. Our data structure eventually consists of a hierarchy of splay trees, each representing a path in a tree $G(S)$, which in turn corresponds to a path in the reference tree $P$.

We now have to check that the whole structure is indeed a search tree on $G$, and that the binary search tree operations are indeed proper elementary operations in the GST model.  
This we can observe by considering the first preferred path $\Pi$ in $P$ with nodes $S$, and the first heavy path in the decomposition of $G(S)$, stored as a splay tree. Searching in such a splay tree amounts to searching along a path of $G(S)$, whose convex hull is a path in $G$. Recalling our Observation~\ref{obs:gst2bst}, this is a proper search in the GST model. The search now proceeds by searching the next heavy paths in $G(S)$, and maybe switch to other preferred paths of $P$. Overall, this is a proper search in $G$. Similarly, rotations in the splay trees are rotations of the search tree on $G$ as defined in the GST model.

\subsection{Our Algorithm}
\label{subsec:alg}

Let us first sum up the overall data structure.
Given the graph $G$, we construct a balanced \Steiner\ search tree $P$ on $G$, which we refer to as the reference tree. We dynamically maintain a decomposition of $P$ into preferred paths. Each such preferred path with nodes $S$ corresponds to an unrooted tree $G(S)$, which is a minor of $G$. As searches are performed, preferred paths are updated, and these updates correspond to linking and cutting trees $G(S)$. For this, we use link-cut trees. Those in turn decompose the trees $G(S)$ into paths and reduce the operations to link and cut on paths. These operations can be handled by splay trees. Together, they form a search tree on $G$.

\paragraph{{\normalfont{\textbf{Bounding the cost.}}}} We now compare the cost of our \strucname\ data structure to $\OPT(G,X)$. The following lemma makes the essential connection between the number of preferred paths touched during a search and the cost of our algorithm.

\begin{lemma}
\label{lem:tango_wilber1}
Let $\ell$ be the number of preferred child changes during a search. Then the cost of this search is $O((\ell+1)(1+\log \log n))$.  
\end{lemma}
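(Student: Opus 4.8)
The goal is to bound the cost of a single search in terms of the number $\ell$ of preferred-child changes it induces in the reference tree $P$. The plan is to follow the query path top-down through $P$, chopping it into the maximal segments that lie inside a single preferred path: a search that causes $\ell$ preferred-child changes visits at most $\ell+1$ such preferred paths of $P$ before reaching the target. For each preferred path $\Pi$ visited, I must account for three things: (a) the cost of \emph{searching} within the data structure representing $\Pi$, (b) the cost of the \emph{cut} operation that splits $\Pi$ at the depth where the preferred child changes, and (c) the cost of the \emph{merge} that reattaches the detached top part to the newly preferred subtree's path. By Corollary~\ref{ref:cor-reference} the height of $P$ is $O(\log n)$, so every preferred path of $P$ has length $k = O(\log n)$, and since each $\Pi$ carries a \Steiner\ set $S$ (Observation~\ref{obs:Steiner_subtree}), its represented tree $G(S)$ has $O(\log n)$ vertices.

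For (a), searching along $\Pi$ means descending through the heavy-path decomposition of $G(S)$ maintained by the link-cut structure; a standard link-cut tree search on a tree of size $k$ costs $O(\log k)$ amortized, and since the splay trees representing heavy paths are genuine BST operations that are valid GST moves (as argued in Section~\ref{subsec:aux}), this is $O(\log k) = O(\log\log n)$ cost in the GST model. For (b) and (c), the key structural input is Lemma~\ref{lem:jit}: splitting $\Pi$ into $\Pi_1$ (the top, depth $< d$) and $\Pi_2$ (the bottom) corresponds to removing $\CH(\Pi_2)$ from $\CH(\Pi)$, which by that lemma disconnects $G(S)$ into at most two pieces — hence $G(S_2)$ is obtained from $G(S)$ by cutting at most two edges, and symmetrically the merge is at most two cuts plus two links on the other path. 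So each of the $O(\ell+1)$ preferred-path transitions costs $O(1)$ link-cut operations, each of amortized cost $O(\log k) = O(\log\log n)$. Summing the search, cut, and merge costs over the $\ell+1$ preferred paths gives $O((\ell+1)(1+\log\log n))$; the $+1$ inside the second factor and the $\ell+1$ absorb the base case $\ell=0$ (a search touching a single preferred path still costs $O(1+\log\log n)$).

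The main obstacle I expect is making the amortization rigorous: link-cut tree (and splay tree) bounds are \emph{amortized}, so to state a per-search bound one must either invoke a potential-function argument showing the potential is bounded by $O(n\log\log n)$ throughout (which is fine for the aggregate competitive bound since search sequences have length $\Omega(n)$), or phrase the lemma's $O((\ell+1)(1+\log\log n))$ as an amortized cost. I would phrase it as amortized and note that since the total potential over the whole execution is $O(n\log\log n)$ and sequences are long, this only affects the final competitive ratio by the additive term already present in Theorem~\ref{thm:lower_bound}. A secondary point to check carefully is that the number of preferred paths of $P$ visited during one search is really at most $\ell+1$: the search path in $P$ from the root to the target crosses a preferred-path boundary exactly when it leaves the preferred child, and each such crossing is precisely one preferred-child change, so the count is immediate from the definition of preferred paths — but it is worth stating explicitly, together with the observation that the cut/merge bookkeeping for updating the preferred paths is triggered only at these $\le \ell$ boundaries.
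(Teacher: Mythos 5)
Your proof follows essentially the same route as the paper: decompose the search into the $\ell+1$ preferred-path segments, charge each segment $O(\log\log n)$ for the link-cut-tree search plus $O(\log\log n)$ for the $O(1)$ cut/link operations (the constant justified via Lemma~\ref{lem:jit} and the $O(\log n)$ path length via Corollary~\ref{ref:cor-reference}), and sum. Your additional remark that the splay/link-cut costs are amortized, so the lemma should really be read as an amortized per-search bound, is a legitimate clarification the paper leaves implicit; it is absorbed harmlessly when summing over the whole sequence in Theorem~\ref{thm:tango_comp}, exactly as you note.
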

\begin{proof}
During the search, the pointer touches exactly $\ell+1$ preferred paths. We account separately for the search cost and the update cost. 

For each preferred path touched, the search cost is $O(\lceil \log \log n \rceil)$. Thus the total search cost is clearly $O((\ell+1)(1+\log \log n))$. 

We now account for the update cost. Recall that we can cut and merge preferred paths on $k$ nodes in time $O(1 +\log k)$. Since each preferred path has at most $O(\log n)$ nodes, we can perform those updates in time $O(1 +\log \log n)$. There are $\ell$ preferred path changes, and for each change we perform one cut and and one merge operation, we get that the total time for merging and cutting is $O(\ell \cdot (1 +\log \log n))$.  The lemma follows. \end{proof}

We now combine this lemma with Theorem~\ref{thm:lower_bound} to get the competitive ratio of \strucname. 

\begin{theorem}
\label{thm:tango_comp}
For any search sequence of length $m=\Omega(n )$, \strucname\ are $O(\log \log n)$-competitive.
\end{theorem}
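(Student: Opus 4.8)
The plan is to combine the per-search cost bound of Lemma~\ref{lem:tango_wilber1} with the interleave lower bound of Theorem~\ref{thm:lower_bound}, analyzed against one fixed reference tree. First I would fix, once and for all, the reference tree $P$ given by Corollary~\ref{ref:cor-reference}: a valid search tree on $G$ that is \Steiner\ and has height at most $2\log n + 2 = O(\log n)$. This $P$ is used only in the analysis; the algorithm maintains the decomposition of $P$ into preferred paths, each of which is \Steiner\ by Observation~\ref{obs:Steiner_subtree} and has at most $O(\log n)$ nodes, so that the $O(1+\log k)$ cut/merge cost established in Section~\ref{subsec:aux} is $O(1+\log\log n)$ per operation, and a search along a preferred path costs $O(1+\log\log n)$.

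Next I would bound the total cost of the algorithm on $X = x_1,\dots,x_m$. Let $\ell_t$ denote the number of preferred-child changes triggered by the $t$-th search. By Lemma~\ref{lem:tango_wilber1} the cost of the $t$-th search is $O((\ell_t+1)(1+\log\log n))$, hence the total cost is
\[
O\!\left(\Big(m + \sum_{t=1}^{m}\ell_t\Big)(1+\log\log n)\right).
\]
By the definition of the interleave bound, $\sum_{t=1}^{m}\ell_t$ is precisely the total number of preferred-child changes during the execution, namely $I(G,P,X)$; Observation~\ref{obs:preferred_path_changes} is the companion statement that this entails at most $I(G,P,X)+n$ preferred-path changes, each handled by one cut and one merge. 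Thus the total cost is $O\big((m + I(G,P,X) + n)(1+\log\log n)\big)$, the additive $n$ also covering the one-time construction of $P$ and the initial search tree, which is $O(n\log n)$.

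Finally I would invoke the lower bound. Applying Theorem~\ref{thm:lower_bound} to the fixed tree $P$ gives $I(G,P,X) \le 2\,\OPT(G,X) + 2n$. Together with the standard bound $\OPT(G,X) = \Omega(m)$ (each search must locate its target and hence costs at least one unit) and the hypothesis $m = \Omega(n)$, we get $n = O(m) = O(\OPT(G,X))$, and therefore $m + I(G,P,X) + n = O(\OPT(G,X))$. Multiplying by the $O(\log\log n)$ overhead factor yields total cost $O(\OPT(G,X)\cdot\log\log n)$, which is the claimed competitive ratio.

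I expect the only delicate point in this particular proof to be the accounting that absorbs the additive $n$ and $m$ terms into $\OPT(G,X)$ — this is exactly where the hypothesis $m = \Omega(n)$ enters, mirroring the corresponding step for BST tango trees. All the substantive difficulty — showing the two-level structure (preferred paths, then link-cut heavy paths, then splay trees) is a valid search tree on $G$, the \Steiner-closed property, and the $O(\log k)$ link/cut operations — has already been discharged in Lemma~\ref{lem:tango_wilber1} and the preceding subsections, so here it only remains to assemble these ingredients.
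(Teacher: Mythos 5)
Your proof is correct and follows essentially the same route as the paper: combine Lemma~\ref{lem:tango_wilber1} with Observation~\ref{obs:preferred_path_changes} to get a cost of $O\bigl((m+I(G,P,X)+n)(1+\log\log n)\bigr)$, then invoke Theorem~\ref{thm:lower_bound} together with $\OPT(G,X)\geq m=\Omega(n)$ to absorb the additive terms. One small slip: the one-time construction of $P$ is $O(n\log n)$ and is \emph{not} covered by an additive $n$ in the running cost; the paper instead simply sets it aside as a fixed additive term independent of the request sequence, which is the cleaner way to handle it.
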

\begin{proof}
We account only for the cost occurred during searches, since the cost of transforming the input tree into a valid \strucname\ is just a fixed additive term which does not depend on the input sequence.

By Observation~\ref{obs:preferred_path_changes} we have that the total number of preferred path changes is at most $I(G,P,X)+n$. Using Lemma~\ref{lem:tango_wilber1} and summing up over all search requests, we get that the cost of \strucname\ is $O((I(G,P,X)+n+m)(1+\log \log n))$. 

By Theorem~\ref{thm:lower_bound} this is bounded by $O((\OPT(G,X)+n+m) \cdot (1+\log \log n) ) $. Note that $\OPT(G,X) \geq m$. Since $m=\Omega(n)$, we have that $  \OPT(G,X) + n +m = O( \OPT(G,X))$. We get that the total cost is upper bounded by \[  O \lrp{ \OPT(G,X) \cdot (1 + \log \log n) }. \qedhere \] 
\end{proof}

\bibliographystyle{alpha}
\bibliography{ToT} 


\end{document}